\newcommand{\Ast}{\mathop{\scalebox{1.5}{\raisebox{-0.2ex}{$\ast$}}}}%
\renewcommand{\eqref}[1]{(\ref{#1})}
\definecolor{cream}{rgb}{1.0, 0.99, 0.82}
\definecolor{seafoam}{rgb}{	0.576, 0.914, 0.745}
\definecolor{olive}{rgb}{0.522, 0.666, 0.526}
\renewcommand{\selectlanguage}[1]{}
	\definecolor{BLACK}{gray}{0}
	\definecolor{WHITE}{gray}{1}
	\definecolor{RED}{rgb}{1,0,0}
	\definecolor{GREEN}{rgb}{0,1,0}
	\definecolor{BLUE}{rgb}{0,0,1}
	\definecolor{CYAN}{cmyk}{1,0,0,0}
	\definecolor{MAGENTA}{cmyk}{0,1,0,0}
	\definecolor{YELLOW}{cmyk}{0,0,1,0}
\theoremstyle{plain}
\theoremstyle{plain}
\newenvironment{proof}[1][\protect\proofname]{\par
	\normalfont\topsep6\p@\@plus6\p@\relax
	\Trivlist
	\itemindent\parindent
	\item[\hskip\labelsep
	\scshape
	#1]\ignorespaces
}{%
	\endtrivlist\@endpefalse
}
\providecommand{\proofname}{\textbf{Proof}}
\theoremstyle{plain}
\renewenvironment{proof}[1][\proofname]{\noindent {\bfseries #1.} }{\qed}
\providecommand{\lemmaname}{Lemma}
\providecommand{\definitionname}{Definition}
\providecommand{\propositionname}{Proposition}
\definecolor{myurlcolor}{rgb}{0,0,0.7}
\renewcommand{\bra}[1]{\left\langle #1 \right|}
\renewcommand{\ket}[1]{\left| #1 \right\rangle}
\renewcommand{\braket}[2]{\left\langle #1 \middle| #2 \right\rangle}
\renewcommand{\ketbra}[2]{\left|#1\middle\rangle\!\middle\langle#2\right|}
\newcommand{\proj}[1]{\ketbra{#1}{#1}}
\newcommand{\id}{\mathds{1}}
\newcommand{\Ket}[1]{\left| #1 \middle \rangle\!\right \rangle}
\newcommand{\KetBra}[2]{\left|#1 \left \rangle\! \left \rangle\!\right \langle \! \right \langle #2\right |}
\newcommand{\Proj}[1]{\KetBra{#1}{#1}}
\newcommand{\haH}
\newtheorem{theorem}{Theorem}
\newtheorem*{thm*}{Theorem}
\newtheorem{conjecture}{Conjecture}
\newtheorem*{lem*}{Lemma}
\theoremstyle{definition}
\newtheorem{observation}{Observation}
\newtheorem{corollary}{Corollary}
\newtheorem*{cor*}{Corollary}
\newtheorem*{prop*}{Proposition}
\newcommand{%
  \immediate\write18{texcount -1 -sum -merge .tex > -words}%
  \input{-words}words%
}[1]{%
  \immediate\write18{texcount -1 -sum -merge #1.tex > #1-words}%
  \input{#1-words}words%
}
\begin{document}
\preprint{APS/123-QED}
\title{Hamiltonian characterization of multi-time processes with classical memory}

\author{Kaumudibikash Goswami} 
\email{goswami.kaumudibikash@gmail.com}

\affiliation{QICI Quantum Information and Computation Initiative, School of Computing and Data Science, The University of Hong Kong, Pokfulam Road, Hong Kong.}

\author{Abhinash Kumar Roy}
\thanks{K.G. and A.K.R contributed equally.}

\affiliation{Department of Physical and Mathematical Sciences, Macquarie University, Sydney NSW, Australia.}

\author{Varun Srivastava}

\affiliation{Department of Physical and Mathematical Sciences, Macquarie University, Sydney NSW, Australia.}

\author{Barr Perez} 

\affiliation{Department of Physical and Mathematical Sciences, Macquarie University, Sydney NSW, Australia.}%

\author{Christina Giarmatzi}

\affiliation{Department of Physical and Mathematical Sciences, Macquarie University, Sydney NSW, Australia.}

\affiliation{School of Computer Science, University of Technology Sydney, Sydney NSW, Australia}

\author{Alexei Gilchrist}
\affiliation{Department of Physical and Mathematical Sciences, Macquarie University, Sydney NSW, Australia.}

\author{Fabio Costa}
\email{fabio.costa@su.se}
\affiliation{Stockholm University and KTH Royal Institute of Technology, Stockholm, Sweden.}

\begin{abstract}
 A central problem in open quantum systems is the characterization of non-Markovian processes, where an environment retains the memory of its interaction with the system. A key distinction is whether or not this memory can be simulated classically, as this can lead to efficient modelling and noise mitigation. Powerful tools have been developed recently within the process matrix formalism, a framework that conveniently characterizes all multi-time correlations through a sequence of measurements. This leads to a detailed classification of classical and quantum-memory processes and provides operational procedures to distinguish between them. However, these results leave open the question of what type of system-environment interactions lead to classical memory.
More generally, process-matrix methods lack a direct connection to joint system-environment evolution, a cornerstone of open-system modelling. 
In this work, we characterize Hamiltonian and circuit-based models of system-environment interactions leading to classical memory. We show that general time-dependent Hamiltonians with product eigenstates, and where the environment's eigenstates form a time-independent, orthonormal basis, always produce a particular type of classical memory: probabilistic mixtures of unitary processes. Equivalently, these Hamiltonians are characterized as commuting with a complete set of observables on the environment. Additionally, we show that the most general type of classical memory processes can be generated by a quantum circuit in which the system and environment interact through a specific class of controlled unitaries.
Our results establish the first strong link between process-matrix methods and traditional Hamiltonian-based approaches to open quantum systems. 
\end{abstract}

\maketitle

 \noindent 
 \section{Introduction}
Characterizing non-Markovian noise in the quantum regime has been a notoriously difficult problem. While this is well-developed for classical stochastic processes, quantum non-Markovianity is a more elusive concept with diverse definitions found in the literature \cite{LI20181,carmichael_1993,Breuer_book_OQS,LI20181}, with approaches like modelling the system-environment interactions~\cite{Lax_1963,Lax_1966,gardiner_2004,Breuer_2009,non_markovianity_entropic,Buscemi2014, Bylicka_2017, Buscemi2025} and dynamical maps ~\cite{kossakowski_1972, gorini_1976,lindblad_1976,Rivas2014,Rivas_2010, Rivas2014}.

Recently, an operational approach to quantum non-Markovianity was developed~\cite{Pollock_pra,pollock_operational_markov, Shrapnel2018_supervised_learning, Luchnikov2019} and deployed in experiments~\cite{White2020,Goswami_non_Markov,Guo2021, modi_ibm,Liang2021, White2022, giarmatzi2023multitime, White2025whatcanunitary}. It provides a complete characterization of a multi-time quantum process: a system evolving in time and probed through a sequence of measurements, where an environment can mediate correlations between measurements at different times. 
This characterization is represented by the \emph{process matrix}~\cite{oreshkov12, oreshkov15}. Markovianity of the multi-time process is reflected by conditions on the process matrix. 
Non-Markovian processes are classified into two main classes depending on whether the memory required to model the environment is classical or quantum, and the former can be further subdivided into more specialized classes~\cite{Giarmatzi2021witnessingquantum,Nery_2021,taranto2023characterising}.

However, the process matrix approach drastically differs from the traditional approach, where the open-system dynamics are modelled by a continuous-time evolution governed by a system-environment Hamiltonian. Rather, it describes the dynamics through a fixed set of gates in a quantum circuit. 
The system can be probed through operations between these gates, whose outcomes reconstruct the process matrix. Although the process matrix formalism confers many operational advantages~\cite{Modi_2012, pollock_operational_markov,Wu2023}, the link between the two approaches has been little explored. A particularly relevant question is how to characterize types of memory in multi-time quantum processes in terms of system-environment Hamiltonians.

Here, we focus on processes that can be simulated with a classical memory, which we call classical memory processes. 
We characterize these processes using both Hamiltonian and quantum-circuit-based approaches. 
In the Hamiltonian-based approach, we find a class of Hamiltonians that leads to a particular class of classical-memory processes given by a probabilistic mixture of unitary (hence Markovian) processes. The result holds irrespective of the choice of probing times, initial system-environment correlation, and the number of time steps. Interestingly, the Hamiltonian class includes entangling interactions such as the Ising interaction. Importantly, establishing this connection between the Hamiltonian and the memory in the process clarifies when an interaction enables device-independent memory detection \cite{Roy_PRA_2024}, as well as when it remains invisible to standard randomized benchmarking protocols \cite{srivastava2025blindspotsrandomizedbenchmarkingtemporal}.

In the quantum-circuit-based approach, we show that any such probabilistic mixture of unitary processes has a dilation in terms of controlled unitaries, with the environment being the control in the same basis throughout the process. Further, in Appendix~\ref{App:unitary_form_classical_memory}, we show that the most general classical-memory processes~\cite{Giarmatzi2021witnessingquantum, taranto2023characterising}  have a similar dilation with a more general class of controlled-unitaries, where additional environmental degrees of freedom can produce stochastic conditional memory.

\noindent 
\section{Multi-time process framework}
We utilize the framework of multi-time process~\cite{chiribella09b,Pollock_pra}, represented by the process matrix~\cite{oreshkov12}, to encode the most general system-environment interactions. Just as a quantum map captures the dynamics of an open quantum system by providing two-time correlations between the initial and final states without explicitly specifying the underlying system–environment dynamics, the process matrix can be understood as a higher-order quantum map in a multi-time setting. In this framework, interventions at each time step probe the system and are represented by quantum operations. Whereas an ordinary quantum map takes a quantum state as its input, this higher-order map instead takes the sequence of quantum operations applied at each time step and yields a joint probability distribution that encodes the multi-time correlations between the probes, thereby offering an operational description of quantum stochastic processes \cite{pollock_operational_markov}. Moreover, unlike quantum maps, this framework allows arbitrary correlation between the system and environment states. Furthermore, techniques analogous to quantum process tomography can be employed to experimentally reconstruct the process matrix describing a multi-time process \cite{giarmatzi2023multitime}, see Appendix~\ref{Sec:Preliminaries} for details.

As shown in Fig.~\ref{fig:gen_non-Markov}, an $N$-time process is associated with $N$ ``sites'', where each site $n$, representing the system at time $t_n$, can be probed with a quantum operation represented by a completely positive (CP) trace non-increasing map, $\mathcal{M}_{a_n|x_n}:A_I^n {\to} A_O^n$, with $x_n$ and $a_n$ being the classical setting and outcome respectively, and $A_I^n$ and $A_O^n$ being the input and output systems respectively. Here, $A_{I(O)}^n{\equiv}\mathcal{L}(\mathcal{H}^{A^n_{I(O)}})$ is the space of linear operators in the Hilbert space $\mathcal{H}^{A^n_{I(O)}}$. Note that we assume the time scale to perform the probing operations to be much smaller than the system-environment interaction time. The map $\mathcal{M}_{a_n|x_n}$ is equivalent to its Choi-Jamio{\l}kowski (CJ) representation~\cite{choi_completely_1975,jamio72,choi75b} $M_{a_n|x_n}:=\sum_{i,j}\ketbra{i}{j}{\otimes}\mathcal{M}_{a_n|x_n}(\ketbra{i}{j})$ defined on $A_I^n\otimes A_O^n$, where $\{\ket{i}\}_i$ is an orthonormal basis at $A_I^n$. The CP condition of the map $\mathcal{M}_{a_n|x_n}$ is equivalent to the positive semi-definiteness of its CJ representation, $M_{a_n|x_n}\ge 0$, and the trace non-increasing condition is guaranteed by the inequality $\Tr_{A_O^n}M_{a_n|x_n}\le \mathds{1}^{A_I^n}$. Moreover, summing over all the classical outcomes of $\mathcal{M}_{a_n|x_n}$ results in a completely positive trace-preserving (CPTP) map, which is reflected in its CJ representation as $\sum_{a_n}M_{a_n|x_n}=\mathds{1}^{A_I^n}$.

The probability of the classical outcomes $\vec{a}{:=}\{a_1,a_2,{\cdots}, a_n\}$ conditioned on the settings $\vec{x}{:=}\{x_1,x_2,{\cdots}, x_n\}$ is given by the generalised Born rule~\cite{oreshkov12,Shrapnel_2018}:  

\begin{align}
p(\vec{a}|\vec{x})=\Tr\left[\left(M_{a_1|x_1}\otimes M_{a_2|x_2}\otimes \cdots \otimes M_{a_n|x_n}  \right)^T W\right].
\end{align}
Here, $W{\in} A_I^1{\otimes} A_O^1{\otimes} A_I^2{\otimes} A_O^2{\otimes} {\cdots} {\otimes} A_I^N{\otimes} A_O^N $ is a positive semidefinite operator, called the process matrix~\cite{oreshkov12}, which fully characterises the background system-environment interaction.

In a Markovian process, the environment resets after each time step, resulting in memoryless dynamics: the system's evolution is independent of its past states. We show such a process in  Fig.~\ref{fig:gen_non-Markov} (b), which is composed of an initial system-environment state $\rho\in A_I^1E^1$ and quantum channels $\mathcal{T}^n{:}A^n_O{\to} A_I^{n{+}1}$ for $n\in \{1,\cdots, N{-}1\}$, defined as $\Tr_{E^{n+1}_I} \mathcal{T}^n(\rho):=U^n[\rho{\otimes}\tau](U^{n})^\dagger$. Here, $U^n$ denotes the system-environment unitary at the $n$-th time step, and $\tau \in E^n$ is the local environment state. The corresponding multi-time process $W_M$ is 
\begin{align}
    W_M:=\rho_{A_I}^1 \otimes T^1 \otimes \cdots \otimes T^{N-1},
\end{align}
where $\rho^{A_I^1}:=\Tr_{E^1}\rho$, and $T^n$ is the CJ representation of the quantum channel $\mathcal{T}^n$. A particular case of Markovian process, relevant to us, is the \emph{unitary Markovian process matrix} $W_U$ where each channel $\mathcal{T}^n$ is unitary, $\mathcal{T}^n(\rho) = V^n\rho (V^n)^{\dagger}$, $V^n(V^n)^{\dagger} = (V^n)^{\dagger}V^n = \id$ with its CJ representation $\Proj{{V}^n}\in A_O^n\otimes A_I^n$, where $\Ket{{V}^n}:=\sum_i\ket{i}\otimes{V}^n\ket{i}$:
\begin{align}
W_U:=\rho^{A_I^1}\otimes \Proj{{V}^1}\otimes \cdots \otimes \Proj{{V}^{N-1}}.\label{Eq:Markov_unitary_main_text}
\end{align}

Now, we introduce the classical common cause (CCC) process. The nomenclature is inspired by quantum causal modelling \cite{Ried_2015, costa2016, Allen_2017, barrett2020quantum} and captures the idea that the memory effects can be attributed to a single, latent common cause, simultaneously affecting all observed events~\cite{direct_cause_note}. A CCC process $W_{\mathrm{CCC}}$ can be represented as a convex sum of Markovian processes: 
\begin{align}
W_{\mathrm{CCC}}&=\sum_{\nu}p(\nu) \rho_{\nu}\otimes T^{1}_\nu \otimes \cdots \otimes T^{N-1}_{\nu}=\sum_\nu p(\nu){W}_\nu.\label{Eq:CCC_main_text}
\end{align}
Here, each ${W}_\nu:=\rho_\nu\bigotimes_{n=1}^{N-1}T^{n}_\nu$ is a Markovian process. A particular case of CCC process relevant to us is the \emph{mixed unitary process} where each ${W}_\nu$ in Eq.~\eqref{Eq:CCC_main_text} is a unitary Markovian process as in Eq.~\eqref{Eq:Markov_unitary_main_text}.

\begin{figure}
    \centering
    \includegraphics[width=\columnwidth]{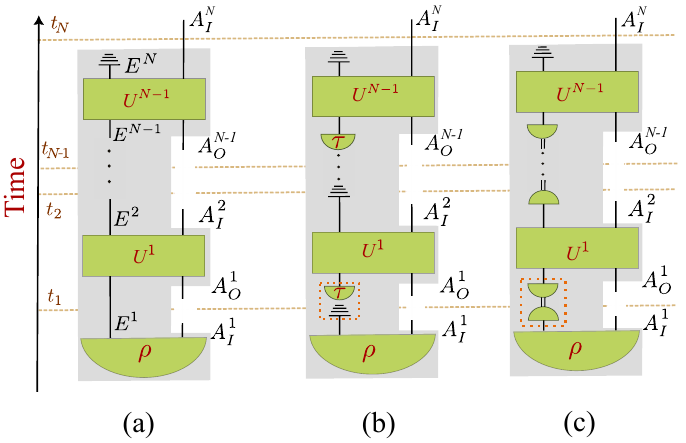}
    \caption{Non-Markovian processes with different memories. (a) The general non-Markovian process with initial joint system-environment state $\rho^{A_I^1E^1}$ and subsequent system-environment unitary interaction $U^n:A_O^nE^n{\to} A_I^{n{+}1}E^{n+1}$ for $n{\in}\{1,2,{\cdots},{N-1}\}$.(b) A Markovian process, A Markovian process, where the environment is reset to a fresh state $\tau$ at each timestep via trace-and-replace channels (orange dashed box), preventing memory retention.  (c) A classical-memory process, where an entanglement-breaking channel acts on the environment at each step (orange dashed box), preserving only classical correlations across time. The ``gaps'' in the circuits represent the possibility of performing arbitrary operations on the system at the given times, under the assumption that their duration is negligible compared to the interaction time. }
    \label{fig:gen_non-Markov}
\end{figure}

\noindent  

\section{Results}
\subsection{Hamiltonians leading to CCC processes} 

In this work, we are interested in relating multi-time processes to an underlying system-environment dynamics, governed by a time-dependent Hamiltonian $H(t)$ acting on a product space $\mathcal{H}^S\otimes \mathcal{H}^E$. Accordingly, the joint system-environment unitary operator $U^n$ from time $t_n$ to $t_{n+1}$ is given by a time-ordered exponential~\cite{sakuraiModernQuantumMechanics2021}
\begin{align}\label{Eq:nunitary}
  U^n = T\exp\left[-i \int_{t_n}^{t_{n+1}}dt H(t)\right].
\end{align}
The global, unitary, $N$-time process is given by Eq.~\eqref{Eq:Markov_unitary_main_text}, where $\Proj{U^n}\in A^n_O \otimes E^n_O \otimes A^{n+1}_I \otimes E^{n+1}_I$, and all input-output spaces are isomorphic to the system-environment space: $\mathcal{H}^{A^{n}_{I(O)}}\cong \mathcal{H}^S$, $\mathcal{H}^{E^{n}_{I(O)}}\cong \mathcal{H}^E$. To find the reduced process describing the system alone, we need to insert identity operations on the environment at each time and discard the final environment space $E^n_I$. Following \cite{chiribella09b}, this can be done by first identifying each input and output environment space, which means redefining the Choi operators of the unitaries as $\Proj{U^n}\in A^n_O \otimes E^n \otimes A^{n+1}_I \otimes E^{n+1}$, with $\mathcal{H}^{E^{n}}\cong \mathcal{H}^E$, and then calculating the system process matrix $W^{A^{1}_IA^{1}_O{\dots}A^{N}_I}$ as
\begin{align}
&W^{A^{1}_IA^{1}_O{\dots}A^{N}_I}=\rho^{A_{I}^{1}E^{1}} \bigg(\Ast _{n=1}^{N-1}\Proj{U^{n}}^{A^{n}_OE^{n}A^{n+1}_IE^{n+1}}\bigg)
       {*}\mathds{1}^{E^{N}}, \label{Eq:process_matrix}
\end{align}
where we have used the \emph{link product}~\cite{chiribella09b}, `*' which allows us to compose arbitrary quantum operations, e.g., state preparation, transformation via channels or measurement, in terms of their CJ representations. In particular, suppose $P$ and $Q$ are CJ representations of two quantum operations with $\mathcal{P}$ and $\mathcal{Q}$ being their respective Hilbert spaces, then the CJ representation of their composition is given by

\begin{align}
    P * Q := \Tr_{\mathcal{P}\cap \mathcal{Q}}[(\mathds{1}^{\mathcal{Q}\setminus \mathcal{P}}{\otimes}P^{T_{\mathcal{P}\cap \mathcal{Q}}})(Q{\otimes}\mathds{1}^{\mathcal{P}\setminus \mathcal{Q}})]. 
\end{align}
In other words, the formula composes the operations by connecting them on their shared subsystems, $\mathcal{P}\cap \mathcal{Q}$, which involves a partial transpose $T^{\mathcal{P}\cap \mathcal{Q}}$ over these shared spaces, followed by a partial trace, $\Tr_{\mathcal{P}\cap \mathcal{Q}}$, over them. For us, the shared spaces are the environment spaces $E^n$. We can now state our main result:

\begin{theorem} \label{Thm:sufficient_Hamiltonians}
    If a time-dependent Hamiltonian $H(t)$ admits a structure 
    \begin{equation} \label{CCCHamiltonian}
    H(t){=}\sum_{j}S_{j}(t)\otimes \mathcal{E}_{j}(t),  
    \end{equation}
    where the operators $S_{j}(t)$ act on the system and $\mathcal{E}_{j}(t)$ on the environment, and $\Big[\mathcal{E}_{j}(t),\mathcal{E}_k(\bar{t})\Big]{=}0 $ for all $j, k , t$ and $\bar{t}$, then the process matrix $W$ in Eq.~\eqref{Eq:process_matrix} is a mixed unitary process. Specifically, it has the form
\begin{align}\label{maintheoremmixedunitary}
W^{A^{1}_IA^{1}_O{\dots}A^{N}_I} = \sum_\nu p(\nu)  \widetilde{W}_\nu ^{A^{1}_IA^{1}_O{\dots}A^{N}_I},
\end{align}
where each $\widetilde{W}_\nu$ is a unitary Markovian process as in Eq.~\eqref{Eq:Markov_unitary_main_text}, and $\{p(\nu)\}_{\nu}$ is a probability distribution.
\end{theorem}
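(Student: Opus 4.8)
The plan is to use the mutual commutativity of the environment operators to diagonalize the environment in a single, time-independent basis, which turns $H(t)$ into a controlled Hamiltonian. Since $\big[\mathcal{E}_{j}(t),\mathcal{E}_{k}(\bar t)\big]=0$ for all indices and all times, the entire family $\{\mathcal{E}_{j}(t)\}_{j,t}$ is simultaneously diagonalizable in one orthonormal basis $\{\ket{\nu}\}$ of $\mathcal{H}^E$ that does not depend on time, so that $\mathcal{E}_{j}(t)\ket{\nu}=\lambda_{j}^{\nu}(t)\ket{\nu}$. Substituting into Eq.~\eqref{CCCHamiltonian} and collecting terms yields the block form
\begin{equation}
H(t)=\sum_{\nu}H_{\nu}(t)\otimes\proj{\nu},\qquad H_{\nu}(t):=\sum_{j}\lambda_{j}^{\nu}(t)\,S_{j}(t),
\end{equation}
where each $H_{\nu}(t)$ acts on the system alone and is Hermitian, since Hermiticity of $H(t)$ descends to each block because the projectors $\proj{\nu}$ are mutually orthogonal.

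Next I would show that the time-ordered exponential of Eq.~\eqref{Eq:nunitary} factorizes over the environment blocks. As the summands for distinct $\nu$ are supported on orthogonal environment subspaces, $\big[H_{\nu}(t)\otimes\proj{\nu},\,H_{\nu'}(\bar t)\otimes\proj{\nu'}\big]=0$ for $\nu\neq\nu'$, so the propagator leaves each block $\mathcal{H}^{S}\otimes\mathrm{span}\{\ket{\nu}\}$ invariant and the time ordering can be resolved block by block. This gives the controlled-unitary form
\begin{equation}\label{controlledU}
U^{n}=\sum_{\nu}V_{\nu}^{n}\otimes\proj{\nu},\qquad V_{\nu}^{n}:=T\exp\!\left[-i\int_{t_n}^{t_{n+1}}\!\! dt\,H_{\nu}(t)\right],
\end{equation}
with each $V_{\nu}^{n}$ a system unitary and the environment acting as a control in the same fixed basis $\{\ket{\nu}\}$ at every step.

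Finally, I would insert Eq.~\eqref{controlledU} into the process-matrix formula Eq.~\eqref{Eq:process_matrix} and carry out the link product. The key structural fact is that a controlled unitary with control in $\{\ket{\nu}\}$ acts as $\ket{i}\ket{\nu}\mapsto(V_{\nu}^{n}\ket{i})\ket{\nu}$, so it preserves the environment label $\nu$: the environment behaves as a classical variable that is never overwritten. Consequently, when the environment spaces $E^{n}$ are contracted between successive steps and the final space $E^{N}$ is traced out (the ${*}\,\id^{E^{N}}$ factor), every cross term $\ketbra{\nu}{\nu'}$ with $\nu\neq\nu'$ drops out, leaving only the diagonal populations of the initial state. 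Writing $\rho^{A_I^1E^1}=\sum_{\nu,\nu'}\rho_{\nu\nu'}\otimes\ketbra{\nu}{\nu'}$ and setting $p(\nu):=\Tr[\rho_{\nu\nu}]\ge 0$ and $\tilde\rho_{\nu}:=\rho_{\nu\nu}/p(\nu)$, the surviving terms assemble into
\begin{equation}
W=\sum_{\nu}p(\nu)\,\Big(\tilde\rho_{\nu}\otimes\Proj{V_{\nu}^{1}}\otimes\cdots\otimes\Proj{V_{\nu}^{N-1}}\Big),
\end{equation}
which is exactly the mixed-unitary form of Eq.~\eqref{maintheoremmixedunitary}, with each $\widetilde{W}_{\nu}$ a unitary Markovian process as in Eq.~\eqref{Eq:Markov_unitary_main_text} and $\{p(\nu)\}_{\nu}$ a valid probability distribution by normalization of $\rho$.

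I expect the main obstacle to be the bookkeeping in this last step: tracking all environment input/output spaces through the link product and verifying rigorously that the controlled structure forces a single shared environment index across every step, so that the off-diagonal coherences are annihilated by the partial transposes and partial traces. The block factorization in Eq.~\eqref{controlledU}, though conceptually transparent, also deserves care, since within a single block $H_{\nu}(t)$ need not commute at different times and the nontrivial time ordering must be retained inside each $V_{\nu}^{n}$.
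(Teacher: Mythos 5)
Your proposal is correct and follows essentially the same route as the paper's proof: simultaneous diagonalization of the commuting environment operators in a fixed basis, the resulting controlled-unitary form of each time-ordered propagator, and the observation that the link product (with the final environment traced out) annihilates all off-diagonal $\ketbra{\nu}{\nu'}$ terms, leaving the diagonal populations of the initial state as the mixing weights $p(\nu)$. The bookkeeping you flag as the remaining obstacle is exactly what the paper's appendix carries out step by step, starting from the last time step and propagating the diagonality backwards.
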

See Appendix~\ref{App:proof_theorem_suff_hamil} for the proof. Interestingly, the Hamiltonian in Theorem~\ref{Thm:sufficient_Hamiltonians} has product eigenstates between the system and environment for all times. This is because the operators $\mathcal{E}_j(t)$ all commute, so they admit an eigendecomposition $\mathcal{E}_j(t){=}\sum_\nu\lambda_j^\nu(t)\proj{\nu}$, where $\{\lambda _j^\nu\}_\nu$ are the eigenvalues and the eigenstates $\{\ket{\nu}\}_\nu$ are independent of both time and the index $j$. Thus we have 
\begin{align}
 H(t)=\sum_{\nu}\Tilde{S}^\nu(t)\otimes \proj{\nu},\ \mathrm{with}\    \Tilde{S}^\nu(t):=\sum_j\lambda_j^\nu(t)S_j^\nu(t)  \label{Eq:Heigenstates1}
\end{align}
Finally, considering the eigendecomposition  $\Tilde{S}^\nu(t){=}\sum_{\mu} \omega^{\mu\nu}(t)\proj{\psi^{\mu\nu}(t)}$, we observe 
\begin{equation}
    \label{Heigenstates}
H(t) = \sum_{\mu,\nu} \omega^{\mu\nu}(t)\proj{\psi^{\mu\nu}(t)}\otimes \proj{\nu}.
\end{equation}

The fact that Hamiltonians leading to CCC have product eigenstates that are time-independent on the environment eigenstates immediately leads to an equivalent, physically relevant, characterization: 

\begin{corollary}
 A Hamiltonian $H(t)$ admits a decomposition as in Eq.~\eqref{CCCHamiltonian} if and only if there is a complete set of conserved observables in the environment. That is to say, the Hamiltonian commutes with a set $\{F_n\}_n $ of environment operators with a non-degenerate common eigenbasis. 
\end{corollary}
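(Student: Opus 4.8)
The plan is to lean on the structural result already recorded in Eq.~\eqref{Eq:Heigenstates1}, which does most of the work: any Hamiltonian of the form Eq.~\eqref{CCCHamiltonian} can be rewritten as $H(t)=\sum_\nu \tilde{S}^\nu(t)\otimes\proj{\nu}$, where $\{\ket{\nu}\}_\nu$ is a \emph{time-independent} orthonormal basis of $\mathcal{H}^E$ (the common eigenbasis of the mutually commuting $\mathcal{E}_j$). Conversely, any $H(t)$ of this block-diagonal form is trivially of the type in Eq.~\eqref{CCCHamiltonian}, since one may take the environment operators to be the orthogonal projectors $\proj{\nu}$, which commute among themselves at all times. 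Hence it suffices to show that the existence of such a fixed environment eigenbasis is equivalent to the existence of a complete set of conserved environment observables, and I would organize the proof around the two directions of this equivalence.

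First I would treat the ``only if'' direction. Given the decomposition, I define environment observables $F_n=\id^S\otimes G_n$ with each $G_n=\sum_\nu g_n(\nu)\proj{\nu}$ diagonal in $\{\ket{\nu}\}$. Each such $F_n$ is conserved, because $[H(t),F_n]=\sum_\nu \tilde{S}^\nu(t)\otimes[\proj{\nu},G_n]=0$ for all $t$. Choosing the functions $g_n$ so that the joint spectrum $(g_1(\nu),g_2(\nu),\dots)$ separates all labels $\nu$ (a single $G$ with non-degenerate spectrum already suffices when $\dim\mathcal{H}^E<\infty$) makes $\{\ket{\nu}\}$ the non-degenerate common eigenbasis, so $\{F_n\}_n$ is a complete set of conserved observables.

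For the ``if'' direction, I would start from a set $\{F_n=\id^S\otimes G_n\}_n$ commuting with $H(t)$ for all $t$ and possessing a non-degenerate common eigenbasis $\{\ket{\nu}\}_\nu$ on the environment. Non-degeneracy means precisely that each rank-one projector $\proj{\nu}$ lies in the abelian algebra generated by $\{G_n\}_n$, so $P_\nu:=\id^S\otimes\proj{\nu}$ also commutes with $H(t)$. Completeness gives $\sum_\nu P_\nu=\id$, and combining $[H(t),P_\nu]=0$ with $P_\nu P_{\nu'}=\delta_{\nu\nu'}P_\nu$ kills every off-diagonal block, $P_\nu H(t)P_{\nu'}=\delta_{\nu\nu'}P_\nu H(t)P_\nu$. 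Therefore $H(t)=\sum_\nu P_\nu H(t)P_\nu=\sum_\nu \tilde{S}^\nu(t)\otimes\proj{\nu}$ with $\tilde{S}^\nu(t):=(\id^S\otimes\bra{\nu})H(t)(\id^S\otimes\ket{\nu})$, which is the block-diagonal form and hence of type Eq.~\eqref{CCCHamiltonian}.

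I expect the only delicate point to be the claim in the backward direction that the projectors $\proj{\nu}$ belong to the algebra generated by $\{G_n\}_n$. This is exactly where non-degeneracy of the common eigenbasis is indispensable: it is what upgrades ``$H$ commutes with each $F_n$'' to ``$H$ is block-diagonal in $\{\ket{\nu}\}$''. Without non-degeneracy, $H(t)$ could rotate eigenvectors within a shared eigenspace of the $F_n$, producing off-diagonal blocks and spoiling the decomposition. Everything else reduces to routine linear algebra on orthogonal projectors.
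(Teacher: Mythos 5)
Your proof is correct and follows essentially the same route as the paper: the forward direction constructs conserved observables diagonal in the common eigenbasis $\{\ket{\nu}\}$ with separating spectra, exactly as in the paper's proof, and your converse is the block-diagonalization argument that the paper asserts in a single sentence ("it must have eigenstates of the form in Eq.~\eqref{Heigenstates}") without spelling out. The only difference is that you supply the details the paper omits — in particular the observation that non-degeneracy places each projector $\proj{\nu}$ in the algebra generated by the $\{G_n\}$, which is indeed the step that makes the converse work.
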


\begin{proof}
In the forward implication, a Hamiltonian in Eq.~\eqref{CCCHamiltonian} has eigendecomposition in the form in Eq.~\eqref{Heigenstates}, so any environment observable of the form $F_n{=}\sum_{\nu} f_n(\nu) \ketbra{\nu}{\nu}$, $f_n(\nu){\neq} f_n(\nu')$ for $\nu{ \neq} \nu'$, is conserved (commutes with the Hamiltonian) and complete (it has a non-degenerate eigenbasis). Conversely, if the Hamiltonian commutes with a complete set of environment observables, it must have eigenstates of the form in Eq.~\eqref{Heigenstates}. 
\end{proof}

Physically, the operators $\{F_n\}$ represent conserved quantities of the environment, such as energy, spin components, particle number, etc. Their conservation, $[H(t), F_n] = 0$, implies that their eigenstates are invariant under evolution, i.e., if the environment is initialized in an eigenstate $\ket{\nu}$ of these observables, it remains in that state throughout the evolution. This identifies the eigenbasis $\{\ket{\nu}\}$ as a set of invariant, non-evolving states that classically label the distinct unitaries applied to the system.

Theorem~\ref{Thm:sufficient_Hamiltonians} provides a sufficient condition for a \emph{time-dependent} Hamiltonian to generate a mixed unitary process exhibiting classical memory for \emph{any} choice of probing times $t_1,{\cdots},t_N$. As shown in Eq.~\eqref{Heigenstates}, a key feature of such a Hamiltonian is that it has product eigenstates. This naturally leads us to investigate the converse: is this product structure also a \emph{necessary} condition for a Hamiltonian to guarantee vanishing quantum memory for all possible probing choices? It turns out that the answer depends on whether the Hamiltonian is time-dependent or time-independent.

We have first considered time-dependent Hamiltonians and found a counterexample violating this necessity. Specifically, we have constructed a ``pulsed'' Hamiltonian with entangled eigenstates that results in a Markovian process for all probing times.

\begin{observation}
Consider the time-dependent ``pulsed'' Hamiltonian $H(t) = \bar{H} \sum_i \delta(t-t^*_i)$ with $\delta(\cdot)$ being the Dirac delta function and where $t^*_1,t^*_2,\dots$ are fixed times at which the ``pulses'' occur. Using Dyson series expansion, one can check that the corresponding unitary between times $t_n$ and $t_{n+1}$ is $U^n = e^{-i \bar{H}}\cdots e^{-i \bar{H}} = e^{-i m \bar{H}}$, where $m$ is the number of pulse times $t^*_j$ between $t_n$ and $t_{n+1}$. Now, considering $\bar{H} {=} 2\pi H_{\mathrm{SWAP}}$, and noting $e^{-i \theta H_{\mathrm{SWAP}}} = \cos(\theta) \mathds{1} -i \sin(\theta) H_{\mathrm{SWAP}}$, we have $U^n=\mathds{1}$ for all $n$, i.e., the corresponding process is a Markovian process for all choices of time even though the underlying time-dependent Hamiltonian has entangled eigenstates. 
However, we also note that this Hamiltonian is operationally indistinguishable from a trivial Hamiltonian: $H(t){=}0$ for all $t$. This leads us to speculate that our necessary condition can be rephrased to include time-dependent Hamiltonians by excluding degenerate cases, as in this example. 
 \end{observation}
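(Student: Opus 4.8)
The plan is to establish the three distinct claims bundled in the Observation in turn: the explicit form of the inter-probe unitary, the resulting Markovianity together with the presence of entangled eigenstates, and the operational equivalence with the trivial Hamiltonian. First I would make the time-ordered exponential in Eq.~\eqref{Eq:nunitary} precise for the singular generator $H(t)=\bar{H}\sum_i \delta(t-t^*_i)$. The clean way is to regularize each Dirac pulse as a limit of a rectangular pulse of width $\epsilon$ and height $1/\epsilon$ centered at $t^*_i$, so that $\int \bar{H}\,\delta(t-t^*_i)\,dt=\bar{H}$ and each isolated pulse contributes a factor $e^{-i\bar{H}}$ in the $\epsilon\to 0$ limit, while the free evolution between pulses is trivial since $H(t)=0$ there. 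The time-ordered product over the $m$ pulses lying in $(t_n,t_{n+1})$ is then the ordered product of $m$ copies of $e^{-i\bar{H}}$; because every kick is generated by the \emph{same} operator $\bar{H}$, all factors commute and the time-ordering is immaterial, giving $U^n=(e^{-i\bar{H}})^m=e^{-im\bar{H}}$.

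Next I would specialize to $\bar{H}=2\pi H_{\mathrm{SWAP}}$ and use that the swap operator is an involution, $H_{\mathrm{SWAP}}^2=\mathds{1}$. Splitting the exponential into even and odd powers yields the stated identity $e^{-i\theta H_{\mathrm{SWAP}}}=\cos(\theta)\mathds{1}-i\sin(\theta)H_{\mathrm{SWAP}}$; evaluating at $\theta=2\pi$ gives $e^{-2\pi i H_{\mathrm{SWAP}}}=\mathds{1}$, hence $U^n=\mathds{1}^m=\mathds{1}$ for every $n$, independently of the number of enclosed pulses and therefore of the probing times $t_1,\dots,t_N$. Feeding these identity unitaries into Eq.~\eqref{Eq:process_matrix} makes every $\Proj{U^n}$ the Choi operator of the identity channel, so the environment decouples and the reduced process factorizes into a product of identity links, i.e. a (trivially) unitary Markovian process in the sense of Eq.~\eqref{Eq:Markov_unitary_main_text}. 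To confirm this is a genuine counterexample to the putative necessity of product eigenstates, I would exhibit an entangled eigenstate of $H_{\mathrm{SWAP}}$: the singlet $\tfrac{1}{\sqrt 2}(\ket{01}-\ket{10})$ is the $-1$ eigenvector, so $\bar{H}$ has no product eigenbasis of the form in Eq.~\eqref{Heigenstates}, yet it still produces a zero-memory process for all probing choices.

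Finally, for the operational-indistinguishability claim I would note that the trivial Hamiltonian $H(t)=0$ likewise yields $U^n=\mathds{1}$ for all $n$ and the same initial reduced state, hence via Eq.~\eqref{Eq:process_matrix} the \emph{same} process matrix $W$; since the generalized Born rule depends on the dynamics only through $W$, no sequence of interventions $\{M_{a_n|x_n}\}$ can distinguish the two generators. I expect the only genuinely delicate step to be the regularization argument for the time-ordered exponential in the first paragraph — one must check that the pulse limit is well defined, insensitive to the pulse shape, and interchangeable with the ordered product — while the remaining steps are purely algebraic. This subtlety is precisely what motivates the closing remark: the obstruction to necessity rests on the non-generic, stroboscopically trivial coincidence $e^{-2\pi i H_{\mathrm{SWAP}}}=\mathds{1}$, suggesting the necessary condition could be recovered by excluding such degenerate, operationally void interactions.
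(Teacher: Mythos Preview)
Your proposal is correct and follows essentially the same line of reasoning as the paper, which presents the argument inline within the Observation itself rather than as a separate proof: compute the inter-probe unitary as a product of $e^{-i\bar H}$ factors, specialize to $\bar H=2\pi H_{\mathrm{SWAP}}$, use $H_{\mathrm{SWAP}}^2=\mathds{1}$ to obtain $U^n=\mathds{1}$, and conclude Markovianity plus operational equivalence with $H(t)=0$. The only minor difference is that where the paper simply invokes the Dyson series to justify $U^n=e^{-im\bar H}$, you propose a rectangular-pulse regularization; this is a more careful treatment of the singular generator but leads to the same conclusion and does not constitute a genuinely different route.
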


We then turn to the more constrained case of \emph{time-independent} Hamiltonians. Here, we are unable to find any such counterexamples. On the contrary, we have found examples where time-independent Hamiltonians with entangled eigenstates kill off quantum memory only for fine-tuned, discrete sets of probing times, while generically exhibiting quantum memory.

\begin{observation} \label{Obs:entangled_eigenstate}
Consider the time-independent Hamiltonian represented by a SWAP operator. This Hamiltonian has entangled eigenstates, leading to a unitary process for a particular sequence of probing times. However, the same Hamiltonian results in a multi-time process with quantum memory for other choices of probing times.  Specifically, the 2-qubit SWAP Hamiltonian has the eigendecomposition $H_{\mathrm{SWAP}}=\sum_{i=0}^3 \alpha_i \proj{\Phi_i}$, where $\ket{\Phi_i}$ are four Bell-states: 
\begin{align}
&\ket{\Phi_0}=\frac{1}{\sqrt{2}}(\ket{00}+\ket{11}), \\
&\ket{\Phi_1}=\frac{1}{\sqrt{2}}(\ket{01}+\ket{10}), \\
&\ket{\Phi_2}=\frac{1}{\sqrt{2}}(\ket{01}-\ket{10}), \ \mathrm{ and} \\
&\ket{\Phi_3}=\frac{1}{\sqrt{2}}(\ket{00}-\ket{11}),
\end{align}
with $\alpha_{i=2}=-1$ and $\alpha_{i\ne 2}=1$. Now, for a choice of times $\{t_n=2\pi n\}_n$, we have $U_n=\exp(-it_n H_{\mathrm{SWAP}})=\mathds{1}$, which is a 2-qubit identity map. Hence, the process defined for these probing times is a Markovian process (hence a subclass of a CCC process), although the Hamiltonian has maximally entangled eigenstates. However, a different choice of times reveals the non-trivial memory in the process. For instance, for a choice of times $\{t_n=(4n+1)\pi/2\}_n$, we have $U_n=\exp(-it_n H_{\mathrm{SWAP}})=-i{H_{\mathrm{SWAP}}}$. As discussed in Ref.~\cite{Giarmatzi2021witnessingquantum}, a process defined by a swap between system and environment is a quantum memory process.
\end{observation}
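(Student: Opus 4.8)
The plan is to treat the observation as two self-contained computations sharing a common spectral analysis: at the times $\{2\pi n\}$ I would show the induced unitaries are trivial and the reduced process is Markovian, while at the times $\{(4n+1)\pi/2\}$ I would show the unitaries are (up to phase) bare SWAPs and the reduced process carries genuine quantum memory. The glue is the fact that $H_{\mathrm{SWAP}}$ is an involution.

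First I would fix the spectral data. Exchanging two qubits twice is the identity, so $H_{\mathrm{SWAP}}^2 = \mathds{1}$; equivalently $H_{\mathrm{SWAP}}$ acts as $+1$ on the symmetric (triplet) subspace and $-1$ on the antisymmetric (singlet) subspace. A one-line check that $H_{\mathrm{SWAP}}\ket{ij} = \ket{ji}$ confirms that the three symmetric Bell states $\ket{\Phi_0},\ket{\Phi_1},\ket{\Phi_3}$ are eigenvectors with eigenvalue $+1$ and that the singlet $\ket{\Phi_2}$ has eigenvalue $-1$, reproducing $\alpha_{i=2}=-1$ and $\alpha_{i\neq 2}=+1$ and exhibiting the claimed maximally entangled eigenbasis. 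The involution property then yields the Euler-type identity $e^{-i\theta H_{\mathrm{SWAP}}} = \cos\theta\,\mathds{1} - i\sin\theta\,H_{\mathrm{SWAP}}$, obtained by splitting the exponential series into even and odd powers of $H_{\mathrm{SWAP}}$ and collapsing each using $H_{\mathrm{SWAP}}^2=\mathds{1}$.

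Next I would substitute the two families of probing times. Writing $(4n+1)\pi/2 = 2\pi n + \pi/2$ makes both evaluations transparent: at $t_n=2\pi n$ one has $\cos\theta=1,\ \sin\theta=0$, so $U^n=\mathds{1}$, whereas at $t_n=2\pi n+\pi/2$ one has $\cos\theta=0,\ \sin\theta=1$, so $U^n=-iH_{\mathrm{SWAP}}$. For the identity case, since $U^n=\mathds{1}^{SE}$ does not couple system and environment, the link product in Eq.~\eqref{Eq:process_matrix} decouples the environment wire entirely: the chain of identity-channel Choi factors carries $E^1$ through to $E^N$ unchanged, the closing $*\,\mathds{1}^{E^N}$ traces it out, and this simply returns $\rho^{A_I^1}=\Tr_{E^1}\rho$ (even when the initial state is correlated). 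What remains is $W=\rho^{A_I^1}\otimes\Proj{\mathds{1}^S}\otimes\cdots\otimes\Proj{\mathds{1}^S}$, which is exactly a unitary Markovian process of the form Eq.~\eqref{Eq:Markov_unitary_main_text}, hence a degenerate CCC process. In the SWAP case, the global phase $-i$ drops out of the Choi projector $\Proj{U^n}$, so each interaction is effectively a bare SWAP between system and environment.

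The substantive step is establishing that the SWAP-at-every-step process genuinely carries quantum memory, i.e.\ cannot be written in the CCC form of Eq.~\eqref{Eq:CCC_main_text}. Here I would invoke the analysis of Ref.~\cite{Giarmatzi2021witnessingquantum}, where a process whose system--environment interaction is a SWAP is identified as a paradigmatic quantum-memory process: the SWAP coherently transfers the (possibly unknown) system state into the environment and back, so the multi-time correlations it supports exceed those reproducible by any classical common cause. I expect this to be the main obstacle, since the phase and trigonometric computations are routine while the quantum-memory claim rests on a genuine non-classicality argument. For a fully self-contained treatment one would evaluate a quantum-memory witness on the resulting process matrix, or argue directly that any Markovian decomposition would require an entanglement-breaking channel on the environment wire, which the coherent SWAP manifestly violates; but the cleanest route is to cite the established characterization.
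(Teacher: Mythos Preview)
Your proposal is correct and follows essentially the same approach as the paper: the observation's justification is already embedded in its statement, and you reproduce it step for step---the Bell-state eigendecomposition with $\pm 1$ eigenvalues, the Euler identity $e^{-i\theta H_{\mathrm{SWAP}}}=\cos\theta\,\mathds{1}-i\sin\theta\,H_{\mathrm{SWAP}}$ (which the paper also uses in the preceding pulsed-Hamiltonian observation), the substitution of the two time families, and the appeal to Ref.~\cite{Giarmatzi2021witnessingquantum} for the quantum-memory claim in the SWAP case. Your extra paragraph spelling out why $U^n=\mathds{1}$ collapses the link-product expression \eqref{Eq:process_matrix} to a unitary Markovian process is more explicit than the paper, which simply asserts Markovianity, but the underlying argument is identical.
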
 
In a similar spirit to Observation~\ref{Obs:entangled_eigenstate}, in Appendix~\ref{App:Hamiltonian_entangled}, we show an example of the Heisenberg model Hamiltonian having entangled eigenstates that can lead to a mixed unitary process for a specific choice of times but exhibits a quantum memory process in general.
The lack of a counterexample for the time-independent case leads us to propose the following conjecture:

\begin{conjecture}
   Every time-independent Hamiltonian that generates a CCC process for all choices of probing times has the form
   \begin{equation}
       H = \sum_{\mu \nu} \omega^{\mu \nu} \proj{\psi^{\mu \nu}}\otimes \proj{\nu}.
   \end{equation}
\end{conjecture}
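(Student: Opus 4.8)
The plan is to establish the converse direction of Theorem~\ref{Thm:sufficient_Hamiltonians} for the time-independent case. By the Corollary, the target form in Eq.~\eqref{Heigenstates} is equivalent to the existence of a \emph{fixed, time-independent} orthonormal environment basis $\{\ket{\nu}\}$ with $[H,\id^S\otimes\proj{\nu}]=0$ for every $\nu$, i.e.\ $H=\sum_\nu A^\nu\otimes\proj{\nu}$ for some system operators $A^\nu$ (diagonalizing each $A^\nu$ then recovers Eq.~\eqref{Heigenstates}). So I would recast the goal as: \emph{if $e^{-iHt}$ generates a CCC process for every choice of probing times, numbers of steps, and initial system-environment states, then $H$ is block-diagonal in a single time-independent environment eigenbasis.} The utility of this reformulation is that the conclusion is a statement about off-diagonal blocks $H_{\nu\nu'}$ (the $\ketbra{\nu}{\nu'}$ component of $H$, viewed as a system operator) vanishing, which is amenable to a short-time expansion.

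First I would import the operational characterization of classical memory from Refs.~\cite{Giarmatzi2021witnessingquantum, taranto2023characterising}: a unitarily-dilated process is CCC for a given choice of times if and only if there is an orthonormal environment basis such that inserting the dephasing channel $\mathcal{D}(\cdot)=\sum_\nu\proj{\nu}(\cdot)\proj{\nu}$ on the environment link between every pair of consecutive interactions leaves the reduced system process invariant for all intermediate system operations. I would then focus on a two-interaction segment with unitaries $U^1=e^{-iHs}$ and $U^2=e^{-iHr}$ sharing one environment link, and write out the non-disturbance condition: the reduced system map obtained by composing $U^1$, the intermediate system operation, and $U^2$ and tracing out the final environment must be unchanged when $\mathcal{D}$ is inserted on the environment between $U^1$ and $U^2$, for all $s,r$, all initial environment states, and all intermediate system inputs.

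The technical heart is to differentiate this finite-unitary condition at $s,r\to 0$. Expanding $e^{-iHs}=\id-iHs+O(s^2)$ and matching the expansion of the dephased and undephased reduced maps order by order should convert the invariance requirement into algebraic constraints on $H$. I expect the leading nontrivial order (first order in each of $s$ and $r$, with the freedom in the intermediate system operation and in the environment input exhausted) to force the off-diagonal blocks $H_{\nu\nu'}$ to vanish for $\nu\neq\nu'$, after one checks that no higher-order terms can compensate them. Establishing the conclusion $H=\sum_\nu A^\nu\otimes\proj{\nu}$ and then performing the system-side eigendecomposition $A^\nu=\sum_\mu\omega^{\mu\nu}\proj{\psi^{\mu\nu}}$ completes the argument.

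The main obstacle, and the reason this is only a conjecture, is twofold and lies in legitimately exploiting the ``for all times'' hypothesis. First, the operational characterization gives, for each \emph{fixed} tuple of times, some classical-memory basis, but a priori this basis could rotate with the chosen intervals; promoting these to a single \emph{time-independent} common basis is exactly what separates general CCC processes from the mixed-unitary subclass, and it is the crux one must force from the requirement that a CCC decomposition exists simultaneously for a continuum of time choices. Second, one must rule out ``accidental'' entangled-eigenstate Hamiltonians that mimic classical memory through fine-tuned cancellations: Observation~\ref{Obs:entangled_eigenstate} and the Heisenberg example show that for isolated discrete time sets a genuinely entangling $H$ can look Markovian, so the proof must genuinely use a dense/continuous family of times to kill these coincidences. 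The short-time differentiation is designed precisely to defeat such fine-tuning, but making rigorous the claim that classicality enforced \emph{to all orders} in the time expansion forbids every off-diagonal block $H_{\nu\nu'}$ is the step I expect to be delicate, since it requires ruling out subtle interference between the off-diagonal part of $H$ and the choice of intermediate system operation.
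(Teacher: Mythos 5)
The statement you are trying to prove is presented in the paper as an open \emph{conjecture}: the authors explicitly say they could not find a counterexample for time-independent Hamiltonians and offer no proof, only supporting evidence (Observation~\ref{Obs:entangled_eigenstate} and the Heisenberg example in Appendix~\ref{App:Hamiltonian_entangled}). So there is no paper proof to compare against, and your proposal --- which you yourself flag as incomplete at the decisive steps --- does not close the gap either. Two points in your plan are genuinely problematic. First, the ``operational characterization'' you import is not an if-and-only-if: invariance of the reduced process under inserting a dephasing channel on the environment link of \emph{the given dilation} is a sufficient condition for classical memory, but CCC-ness is defined at the level of the process matrix (existence of \emph{some} convex decomposition into Markovian processes, Eq.~\eqref{CCC_proc1}), which need not be witnessed by dephasing any particular purifying environment. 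Recasting the hypothesis as a non-disturbance condition on the specific dilation $e^{-iHt}$ therefore strengthens the hypothesis illegitimately; a Hamiltonian could generate CCC processes via decompositions invisible to your dephasing test.

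Second, even granting that recasting, the two steps you identify as ``delicate'' are precisely the content of the conjecture, not technical afterthoughts: (i) the classical basis certified for one tuple of probing times may differ from that for another, and forcing a single time-independent basis from the continuum of time choices is the whole difficulty; (ii) the short-time expansion at first order in $s$ and $r$ generically does not see memory at all (memory effects in the reduced process first appear at higher joint order in the two intervals), so ``matching order by order'' must be carried out to all orders with the intermediate operation optimized at each order, and you give no argument that the off-diagonal blocks $H_{\nu\nu'}$ cannot hide in cancellations there --- indeed the paper's pulsed-SWAP observation shows exactly such cancellations occurring for non-generic Hamiltonians. What you have written is a reasonable research programme, and is consistent in spirit with the evidence the authors assemble, but it is a plan with its hardest steps left open rather than a proof.
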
   

We note in passing that the constraint in Theorem \ref{Thm:sufficient_Hamiltonians} essentially requires the eigenstates of the environment to be time-independent. In Appendix \ref{App:varying_env-basis}, we present a scenario where the environment eigenstates are slowly time varying and show that even a slight variation, in general results in quantum memory.

\noindent 
\subsection{Quantum circuit model for CCC process}

An interesting question is how to simulate CCC processes on gate-based quantum computers, where the parties act at fixed times and the system-environment interaction is modelled by a fixed sequence of unitaries. The answer lies in the unitary associated with the Hamiltonian in Theorem~\ref{Thm:sufficient_Hamiltonians}, which for any timesteps $t_n$ to $t_{n+1}$ is given by the time-ordered exponential $U(t_n,  t_{n+1})=T\exp[-i\int_{t_n}^{t_{n+1}}dt H(t)]$. Expanding $H(t)$ as in Eq.~\eqref{Eq:Heigenstates1}, we have
\begin{align}
 U(t_n,  t_{n+1})&=\sum_{\nu}T\exp[-i\int_{t_n}^{t_{n+1}}dt\Tilde{S}^\nu(t)] \otimes \proj{\nu}  \nonumber \\
 &=\sum_{\nu} \widetilde{U}_{\nu}(t_n,  t_{n+1})\otimes \proj{\nu},
\end{align}
where $\widetilde{U}_{\nu}(t_n,  t_{n+1}){=}T\exp[-i\int_{t_n}^{t_{n+1}}dt\Tilde{S}^\nu(t)]$. In other words, the unitary takes the form of a controlled unitary, with the environment being the control system in the same orthonormal basis throughout all the time steps. Thus, according to Theorem~\ref{Thm:sufficient_Hamiltonians}, such unitaries result in a CCC process (specifically, a mixed unitary process).

\begin{figure}
    \centering
    \includegraphics[width=\columnwidth]{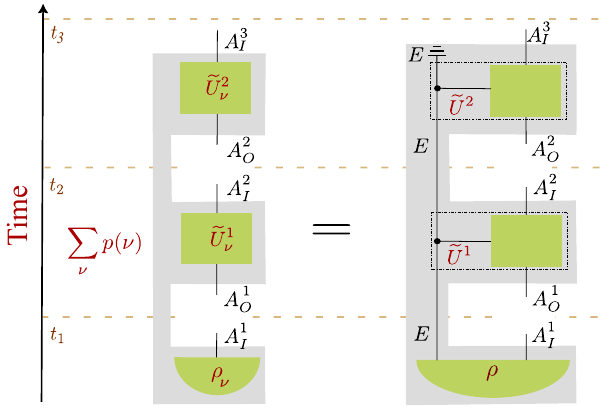}
    \caption{ Depiction of Theorem~\ref{Thm:direct_cause_extension} for a three-time mixed unitary process $W{=}\sum_{\nu} p(\nu) W_{\nu}$ with $W_{\nu} {=} \rho_{\nu}^{A_I^1}{\otimes}\Proj{\widetilde{U}_{\nu}^1}^{A_O^1A_I^1}{\otimes}\Proj{\widetilde{U}_{\nu}^2}^{A_O^2A_I^3}$ being a Markovian process. Here $\rho_{\nu}$ is an initial state at $A_I^1$ the system, and $\Proj{\widetilde{U}_{\nu}^k}$ is the CJ representation of the unitary $\widetilde{U}_{\nu}^k$, with $k {\in} \{1,2\}$. The right-hand side of the picture shows that any such mixed unitary process can be dilated in terms of controlled unitaries $\widetilde{U}^k=\sum_{\nu}\widetilde{U}_{\nu}^k\otimes \proj{\nu}^E$, $k {\in} \{1,2\}$, and an initial quantum-classical system-environment state $\rho^{A^1_IE}=\sum_{\nu}p(\nu)\rho_{\nu}^{A_I^1}{\otimes}\proj{\nu}$. See Appendix~\ref{App:proof_direct_cause_ext} for the proof.}
    \label{fig:CDC_necessary}
\end{figure}

In the following theorem, we show that the converse statement also holds, and thereby formalize the connection between controlled unitaries and the mixed unitary processes.  

\begin{theorem} \label{Thm:direct_cause_extension}
        All mixed unitary processes have a dilation in terms of controlled unitaries, with the environment being the control with the same orthonormal basis in all time steps.
\end{theorem}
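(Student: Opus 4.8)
The plan is to prove Theorem~\ref{Thm:direct_cause_extension} constructively. Given an arbitrary mixed unitary process $W=\sum_\nu p(\nu) W_\nu$, where each $W_\nu=\rho_\nu^{A_I^1}\otimes\bigotimes_{n=1}^{N-1}\Proj{\widetilde{U}_\nu^n}$ is a unitary Markovian process as in Eq.~\eqref{Eq:Markov_unitary_main_text}, I would exhibit an explicit system-environment dilation and verify that discarding the environment reproduces $W$. The dilation is the one depicted (for $N=3$) in Fig.~\ref{fig:CDC_necessary}: introduce an environment $\mathcal{H}^E$ with an orthonormal basis $\{\ket{\nu}\}$ labelled by the mixture index, an initial quantum-classical state $\rho^{A_I^1E^1}=\sum_\nu p(\nu)\,\rho_\nu^{A_I^1}\otimes\proj{\nu}^{E^1}$, and at each step $n$ the controlled unitary $\widetilde{U}^n=\sum_\nu \widetilde{U}_\nu^n\otimes\proj{\nu}^E$, so the environment acts as the control in the \emph{fixed} basis $\{\ket{\nu}\}$ at every time step. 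Feeding these into the process-matrix formula Eq.~\eqref{Eq:process_matrix} and showing the output equals $\sum_\nu p(\nu) W_\nu$ establishes the claim.

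First I would compute the CJ operator of the controlled unitary. Since $\widetilde{U}^n(\ket{i}\otimes\ket{\mu})=\widetilde{U}_\mu^n\ket{i}\otimes\ket{\mu}$, the double-ket factorizes as $\Ket{\widetilde{U}^n}=\sum_\mu\Ket{\widetilde{U}_\mu^n}^{A_O^nA_I^{n+1}}\otimes\ket{\mu}^{E^n}\ket{\mu}^{E^{n+1}}$, whence
\begin{align}
\Proj{\widetilde{U}^n}=\sum_{\mu,\mu'}\KetBra{\widetilde{U}_\mu^n}{\widetilde{U}_{\mu'}^n}^{A_O^nA_I^{n+1}}\otimes\ketbra{\mu}{\mu'}^{E^n}\otimes\ketbra{\mu}{\mu'}^{E^{n+1}}.
\end{align}
The decisive feature is that the environment part of $\Proj{\widetilde{U}^n}$ carries the \emph{same} label $\mu$ on its incoming edge $E^n$ and outgoing edge $E^{n+1}$: the control basis is never rotated.

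Next I would evaluate the link products of Eq.~\eqref{Eq:process_matrix} step by step, starting from the diagonal initial environment state. Contracting $\rho^{A_I^1E^1}$ with $\Proj{\widetilde{U}^1}$ over $E^1$ produces the factor $\Tr_{E^1}[\proj{\nu}\,\ketbra{\mu}{\mu'}]=\delta_{\nu\mu}\delta_{\nu\mu'}$ (using $\proj{\nu}^{T_{E^1}}=\proj{\nu}$), which annihilates all coherences with $\mu\neq\mu'$ and forces $\mu=\mu'=\nu$. The result is again diagonal in $\{\ket{\nu}\}$ on the surviving edge $E^2$, namely $\sum_\nu p(\nu)\,\rho_\nu^{A_I^1}\otimes\Proj{\widetilde{U}_\nu^1}\otimes\proj{\nu}^{E^2}$. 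Because the environment re-emerges diagonal after each contraction, the same cancellation recurs at every subsequent step, so by induction on $n$ no inter-$\nu$ coherence is ever generated and the system branches evolve independently as $\Proj{\widetilde{U}_\nu^n}$. Finally, the link with $\mathds{1}^{E^N}$ traces out the environment ($\Tr_{E^N}\proj{\nu}=1$), leaving precisely $\sum_\nu p(\nu)\,\rho_\nu^{A_I^1}\otimes\bigotimes_n\Proj{\widetilde{U}_\nu^n}=W$.

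I expect the main obstacle to be bookkeeping rather than conceptual: correctly tracking the shared environment spaces $E^n$, the partial transposes, and the reindexing across the chain of link products, while keeping the convention for $\Ket{\widetilde{U}^n}$ consistent with Eq.~\eqref{Eq:process_matrix}. The conceptual crux—and the reason the dilation works—is that a controlled unitary in a \emph{fixed} basis, acting on an environment prepared diagonal in that basis, can never create coherence between distinct values of $\nu$; the environment thus behaves as a read-only classical register recording which unitary branch is applied, the hallmark of classical memory. Verifying this invariance of diagonality at each step is what makes the induction close.
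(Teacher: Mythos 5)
Your proposal is correct and follows essentially the same route as the paper's proof: introduce the controlled unitaries $\widetilde{U}^n=\sum_\nu \widetilde{U}_\nu^n\otimes\proj{\nu}^E$ together with the quantum-classical initial state $\sum_\nu p(\nu)\rho_\nu\otimes\proj{\nu}$, and check that the link products reproduce $W_{\textrm{MU}}$. The only differences are cosmetic: you carry out the link-product induction explicitly (the paper asserts the final equality, relying on the analogous computation in the proof of Theorem~\ref{Thm:sufficient_Hamiltonians}), while the paper additionally observes that any initial state whose steered ensemble on the system is $\{p(\nu),\rho_\nu\}$ works, not only the diagonal one.
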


See Appendix~\ref{App:proof_direct_cause_ext} for the proof. Fig.~\ref{fig:CDC_necessary} depicts Theorem~\ref{Thm:direct_cause_extension} for a mixed unitary process with $N{=}3$. Beyond mixed unitary processes, in Appendix~\ref{App:unitary_form_classical_memory}, we also characterize the quantum-circuit models, involving nested-controlled unitaries, that give rise to the most general classical-memory processes with fixed probing times. The intuition behind this form is that a multi-time process is a classical-memory process if it can be modelled by a sequence of system and environment interactions that are conditional quantum instruments propagating classically-encoded information across sites. We explicitly construct dilations of these instruments which are conditional
unitary with an extra environment, and a measurement on that
environment as shown in Fig.~\ref{fig:stochastic_to_control}. Note that, while we have shown that all classical memory processes admit at least one dilation based on controlled or nested-controlled unitaries, such constructions are not unique, and alternative constructions resulting from a different dilation are possible. We now present some examples of Hamiltonians that satisfy the conditions of Theorem \ref{Thm:sufficient_Hamiltonians}, resulting in classical-memory processes. 

\noindent 
\subsection{Examples}

\textit{Example 1:} A simple two-qubit model, with one of the qubits as system and another one as environment, with interaction Hamiltonian given by $H^{SE} = w\sigma_{i}\otimes \sigma_{j}$ where $\sigma_{i}$ are Pauli operators and $w$ is the interaction strength, has been well studied in the context of non-Markovianity \cite{Jordan_PRA_2004,UshaDevi_PRA_2011,Wudarski_2016_EPL}. Since the Hamiltonian is of the product form, Theorem \ref{Thm:sufficient_Hamiltonians} implies the resulting process will be a convex sum of Markovian processes, i.e., a process with classical memory. In particular, if the initial system-environment state is $\rho^{A^{1}_{I}E_1}$, then the resulting process will be of the form,
\begin{align}
    W = \sum_{\nu} p(\nu) \rho_{\nu}^{A_{I}^{1}}\otimes T_{\nu}^{A_{O}^{1}A_{I}^{2}}\otimes T_{\nu}^{A_{O}^{2}A_{I}^{3}}\otimes \mathds{1}^{A_{O}^{3}},
\end{align}
where
\begin{align}
    p ({\nu}) \rho_{\nu}^{A_{I}^{1}} = (\id^{A^{1}_{I}}\otimes\langle \lambda_{\nu}^{j}|)\rho^{A^{1}_{I}E_1}(\id^{A^{1}_{I}}\otimes|\lambda_{\nu}^{j}\rangle)
\end{align}
and the unitary channels $T_{\nu}$ are 
\begin{align}
  T_{\nu} =\! \sum_{m,n} c_{\nu m} c_{\nu n}^{*}|\lambda_{m}^{i}\lambda_{m}^{i}\rangle\langle\lambda_{n}^{i}\lambda_{n}^{i}| = \Proj{e^{-i w \lambda_{\nu}^{j} \sigma_i\Delta t}},
\end{align}
where $c_{\nu m } = \exp(-iw\lambda_{\nu}^{j}\lambda_{m}^{i}\Delta t)$, $\nu, m, n$ can be $+,-$ corresponding to positive and negative eigenvalues of corresponding Pauli operators, and $\Delta t$ is the time interval between sites (which we take to be all equal for simplicity).  A special example of the above Hamiltonian is the ZZ-type interaction Hamiltonian, which leads to cross-talk in transmon-based quantum devices. Note that the result also applies to an arbitrarily large environment: as long as all interactions are of $ZZ$ type, an arbitrary number of environment qubits leads to classical memory. In contrast, some recent superconductor-based experiments, such as Refs.~\cite{giarmatzi2023multitime, White2025whatcanunitary}, demonstrate nonclassical memory, which indicates in those cases a ZZ-type Hamiltonian cannot explain the non-Markovian noise.

\medskip

\noindent \textit{Example 2:} A widely used global operation in quantum information processing is the CNOT gate, which is generated by the Hamiltonian $H^{SE} {=}  \proj{-}^S {\otimes} \proj{1}^{E}$, with $U_{\mathrm{CNOT}}=\exp[-i\pi H]$. This Hamiltonian is in the product form; therefore, it has fixed (time-independent) product eigenstates. Due to Theorem~\ref{Thm:sufficient_Hamiltonians}, the CNOT operation can not generate a quantum non-Markovian process. 

\medskip

\noindent \textit{Example 3:} A case where the environmental degree of freedom is not finite is the following model: a photon's polarisation degree of freedom is the system qubit, and frequency modes are the environment. The interaction between these degrees of freedom is given by the experimentally implementable birefringent effect,
\begin{equation}
    H^{SE} = \int_0^{\infty}  \omega  (\eta_{0}|0\rangle\langle 0| + \eta_{1}|1\rangle\langle 1|) \otimes |\omega\rangle\langle\omega| \mathrm{d}\omega.
\end{equation}
This system has been studied to experimentally realize and control non-Markovian dynamics, particularly Markovian to non-Markovian transition \cite{Liu_2011_Nat_Phys}. The structure of Hamiltonian satisfies the criteria of Theorem \ref{Thm:sufficient_Hamiltonians}.
Therefore, the resulting process is a mixed-unitary process.

\noindent \textit{Example 4:} In the diamond NV centre, the full Hamiltonian between the electronic spin triplet and the nitrogen nuclear spin (of $^{14}N$ isotope) is given by the Hamiltonian of the form 
\begin{align}
    \label{NV centre}
    H^{SE}&=\gamma_{1}S_{z}^{2} \otimes \id + \gamma_{2} S_z \otimes \id + \gamma_3 \id \otimes I_{z}^{2} + \gamma_{4} \id \otimes I_{z}+\\
    &~~~~~\gamma_5 S_{z} \otimes I_{z} + \gamma_6 \left(S_{x} \otimes I_{x} + S_{y} \otimes I_{y}\right)
\end{align}

where $S_{i}/I_{i}$ are spin-1 operators. Under the secular approximation, we can assume a large zero field splitting $\gamma_{1} \gg \gamma_6$ and thus ignore the last term in the above equation.  \cite{DOHERTY20131}. Also, it was shown in \cite{PhysRevLett.121.060401} that the effect of ambient environment of carbon atoms and other impurities in the diamond is effectively Markovian. The non-Markovian effects are thus attributed to a Hamiltonian that satisfies Theorem~\ref{Thm:sufficient_Hamiltonians}.

\noindent \section{Conclusion} 
We have characterized a class of Hamiltonians that generate classical memory processes represented as probabilistic mixtures of unitary Markovian processes. These Hamiltonians admit a complete set of commuting conserved quantities in the environment. Intuitively, fixing the conserved quantities induces an effective system Hamiltonian, which can be different for the different conserved values. As the environment is not observed, its initial state defines a classical probability distribution over the values of the conserved quantities, leading to a classical mixture of unitary system evolutions. 

Using the process matrix approach, we have demonstrated that several previously studied non-Markovian interactions, including non-trivial entangling interactions, are examples of these mixed unitary processes. 
We have also characterized the quantum circuits associated with these classes of processes. 
Particularly, we have shown that the multitime processes generated by these Hamiltonians can be described as sequences of controlled unitaries conditioned on a common environment in a fixed basis. 
Conversely, we have also shown that any mixed unitary process can be generated by a sequence of controlled unitaries and an initial quantum-classical system-environment state. 
This aligns with previous results of Ref.~\cite{Baecker2024} where controlled unitaries were linked to classical memory in the context of dynamical maps, which in our formalism corresponds to the constrained scenarios of only two-time correlations. 
Additionally, two-time mixed unitary processes, sometimes called random unitaries, are also known to characterize the class of recoverable channels whose noise evolution can be corrected perfectly upon measuring the environment and conditionally post-processing the system \cite{Gregoratti2003, Buscemi2005-ir, Buscemi2007}. Moreover, it has been shown that single-shot error correction schemes are compatible with stochastic noise that is spatially
local but permits arbitrary classical temporal correlations \cite{PhysRevX.6.041034,liu2024nonmarkoviannoisesuppressionsimplified}. Non-Markovian noise with classical memory can also be shown to suppress worst-case gate errors for certain classes of interaction Hamiltonians \cite{srivastava2025blindspotsrandomizedbenchmarkingtemporal}, while it has been demonstrated that operationally accessible measures of the non-classicality of a non-Markovian process are both computationally and experimentally feasible \cite{PhysRevX.10.041049}.
Furthermore, beyond mixed unitary processes, in Appendix~\ref{App:unitary_form_classical_memory}, we have provided quantum circuit models for the most general classical-memory processes.
 
 As future work, exploring other Hamiltonians that generate distinct types of non-Markovian memory and extending these results to continuous-time and continuous-variable processes are promising directions. Additionally, as shown in Appendix~\ref{App:Hamiltonian_entangled}, processes may exhibit classical memory at specific probing times despite having quantum memory otherwise. Investigating system-environment models that consistently exhibit similar memory characteristics across all probing times presents another compelling avenue for research.

\acknowledgments
{K.G. thanks Manabendranath Bera for the interesting discussions. This work is supported by the Hong Kong Research Grant Council (RGC) through grant No. 17307520, John Templeton Foundation through grant 62312, “The Quantum Information Structure of Spacetime” (qiss.fr) and the Australian Research Council (ARC) Centre of Excellence for Quantum Engineered Systems grant (CE170100009). A.K.R and V.S acknowledge funding from the Sydney Quantum Academy. C. G. was supported by a UTS Chancellor's Research Fellowship.}

\bibliography{Non_Markovianity}

\appendix

\section{Preliminaries} 
\label{Sec:Preliminaries} 
\subsection{Process matrix formalism}
We use the process matrix formalism~\cite{oreshkov12} to capture non-Markovian dynamics. The framework models a general scenario where a system and an environment, possibly initially correlated, can evolve and interact. The system can be probed at well-defined times during the evolution, a total of $N$ times. After the initial preparation and in between the probes, the system evolves according to some system-environment Hamiltonian that can be represented as a system-environment gate applied in the time step between two probing times. This leads to a general \emph{$N$-time process}, represented in Fig.~\ref{fig:gen_non-Markov} in the manuscript: initially correlated state, probe $1$, gate $1$, probe $2$, gate $2$, $\cdots$ gate $N-1$, probe $N$. 

Each probe $n$ is a quantum operation that is represented by a completely positive (CP) and trace non-increasing map $\mathcal{M}_{a_{n}|x_{n}}: A_{I}^{n}\rightarrow A_{O}^{n}$, that maps the input system $A_{I}^{n}$ to the output system\footnote{Note that, even in the typical scenario where operations act on a given system, input and output are modelled as distinct (isomorphic) spaces.} $A_{O}^{n}$. $A^n_{I(O)}\equiv \mathcal{L}(\mathcal{H}^{A^n_{I(O)}})$ is the space of linear operators on the Hilbert space $\mathcal{H}^{A^n_{I(O)}}$. In general, the operation involves a measurement, so $x_{n}$ and $a_{n}$ represent the classical settings and outcomes, respectively.  
Whenever there is no confusion, we will drop the labels for the Hilbert space. Furthermore, the framework assumes that the time to implement the operations is negligible compared to the time scales of the system-environment interaction.  We will use the term ``site'' to refer to the pair of input and output spaces, $A^n_I$, $A^n_O$, representing the possibility of an operation at time $n$. Sites can be seen as a temporal generalization of subsystems, see also \cite{Aharonov2014}.

Using the Choi Jamio{\l}kowski (CJ) isomorphism \cite{choi_completely_1975,jamio72,choi75b}, the operations can be equivalently represented by a linear, positive semi-definite operator $M_{a_{n}|x_{n}}$ defined on $A_I^{n}{\otimes} A_O^{n}$
\begin{align}
  M_{a_{n}|x_{n}} := [\mathcal{I}^{A_I^{n}}\otimes \mathcal{M}_{a_{n}|x_{n}}\left(\KetBra{\mathds{1}}{\mathds{1}}\right)^{A_I^{n}}], \label{Eq:CJ}
\end{align} 
where $\mathcal{I}^{A_I^{n}}$ is the identity map $A_I^{n}\to A_I^{n}$, $\Ket{\mathds{1}}^{A_I^{n}} := \sum_{i}\ket{i}^{A_{I}^{n}}\otimes \ket{i}^{A_{I}^{n}}$ is the unnormalised maximally entangled state, and $\{|i\rangle\}_i$ is an orthonormal basis for $A^{n}_{I}$. In general, the trace non-increasing property of the CP map is reflected by the operator inequality $\Tr_{A_O^{n}} M_{a_{n}|x_{n}}\le \mathds{1}^{A_I^{n}}$, note $\mathds{1}^{A_I^n}$ is the CJ representation of the trace-map acting on $A_I^n$. The equality is satisfied when the quantum operation is a completely positive, trace-preserving (CPTP) map, i.e., a quantum channel, representing a deterministic operation with no associated measurement output. A quantum instrument is a collection of CP maps summing up to a CPTP one, $\Tr_{A_O^{n}} \sum_{a_{n}} M_{a_{n}|x_{n}} =  \mathds{1}^{A_I^{n}}$, representing all the possible outcomes of a measurement. The last site consists only of an input space $A^N_I$, with no corresponding output space, and the associated instrument reduces to a Positive Operator-Valued Measure (POVM) acting on $A_I^{N}$, namely a collection of positive-semidefinite operators satisfying $\sum_{a_{N}} M_{a_{N}|x_{N}} =  \mathds{1}^{A_I^{N}}$.  

In the setup we are considering, the environment is never measured, and its effects only manifest as multi-time correlations between measurements on the system. All information accessible through operations on the system is compactly encoded in a \textit{process matrix}, Fig.~\ref{fig:gen_non-Markov}(a), which is a positive-semidefinite operator $W \in A_I^1 \otimes A_O^2\otimes \cdots A_I^N$, defined on the tensor product of all input-output Hilbert spaces. Together with the sequence of CP maps $\left\{\mathcal{M}^{A_I^{n}\to A_O^{n}}_{a_n|x_n}\right\}_{n=1}^N$ at different times, the process matrix $W^{A_I^1A_O^1\cdots A_I^N}$ generalises the Born rule~\cite{Shrapnel_2018} to evaluate the conditional probability of classical outcomes $\{a_n\}_n$ given settings $\{x_n\}_n$: 
\begin{align}
  &p(a_1, a_2,\cdots a_N|x_1, x_2,\cdots x_N)\nonumber \\
  &\qquad \quad =\Tr\left[\bigotimes_{n=1}^{N}\left(M^{A_I^{n}A_O^{n}}_{a_n|x_n}\right)^T W^{A^{1}_{I} A^{1}_{O}\cdots A_I^{N}}\right], \label{Eq:Born_rule1}
\end{align}
where the superscript $T$ represents transpose. The process matrix $W^{A^{1}_{I} A^{1}_{O}\cdots A_I^{N}}$ represents an $N$-time process and the operations occur at the $N$ sites. This formalism is equivalent to that of quantum channels with memory \cite{Kretschmann2005}, quantum strategies~\cite{gutoski06,gutoski2012quantum}, process tensors~\cite{Pollock_pra, pollock_operational_markov} and quantum combs~\cite{chiribella08, chiribella09b}.

\subsection{Markovian process}

We have briefly introduced this class of process in the main text, which we reintroduce with a bit more details. A Markovian process is characterized by memoryless dynamics, i.e., the system's future evolution does not depend on past states of the system. Operationally, a memoryless process should be reproducible without retaining any external memory. If the system interacts with an environment during its evolution, the environment can be discarded and re-prepared in a new state after each time-step, resulting in a sequence of quantum channels independent from each other. Fig.~\ref{fig:gen_non-Markov} shows a general non-Markovian (a) and a Markovian (b) process, where we have omitted the environment labels for clarity. Denoting $\rho$ the initial state and $\mathcal{T}^n:A_O^n\rightarrow A^{n+1}_I$ the quantum channels for $n=1,\dots, N-1$, the probability for a sequence of CP maps $\mathcal{M}_{a_1}, \dots, \mathcal{M}_{a_n}$ (where we omit the settings for simplicity) is simply given by the map composition $P(a_1,\dots,a_N ) = \Tr\left[\mathcal{M}_{a_N}\circ \mathcal{T}^{N-1} \cdots \circ \mathcal{T}^{1}  \circ \mathcal{M}_{a_1}(\rho)\right]$. This corresponds to plugging into the Born rule, Eq.~\eqref{Eq:Born_rule1}, the process matrix,
\begin{equation}\label{Markovian_process}
W_{\textrm{M}}  = \rho^{A_{I}^{1}}\otimes{{T}^{1}}  \otimes  \cdots  \otimes {{T}^{N-1}},
\end{equation}

where $T^{n} \in A_{O}^{n}\otimes A_{I}^{n+1}$ is the Choi representation of the quantum channel $\mathcal{T}^{n}$ and we omit the superscripts denoting the systems. We see that Markovian process matrices are proportional to product states, subject to the constraint that each factor is the Choi representation of a CPTP map.

A particular case that will be of interest later is when each channel is unitary, representing a closed-system evolution with no interaction with the environment. Accordingly, we define \emph{unitary Markovian process matrices} as\footnote{The term unitary process matrix has been used slightly differently in the literature, referring to processes that start with an output space \cite{Araujo2017purification, Costa2020}. Furthermore, here, we restrict unitary processes to be causally ordered and Markovian, which is not necessarily true in general.}
\begin{align}\label{Unitary_process}
& W_{\textrm{U}}  &  = \rho^{A_{I}^{1}}\otimes {\Proj{\widetilde{U}^{1}}} \otimes \cdots \otimes {\Proj{\widetilde{U}^{N-1}}},
\end{align}
where the CJ representation of the unitary $\widetilde{U}^{n}: \mathcal{H}^{A_{O}^{n}}\rightarrow \mathcal{H}^{A_{O}^{n+1}}$ is $\Proj{\widetilde{U}^{n}}$
where
\begin{equation}
    \label{unitarychoivector} 
    \Ket{\widetilde{U}^{n}} :=\id\otimes \widetilde{U}^{n} \Ket{\id} \in \mathcal{H}^{A_{O}^{n}}\otimes \mathcal{H}^{A_{O}^{n+1}}.
\end{equation}

\subsection{Classical-memory process}\label{Classical memory process}

We are interested in the subclass of non-Markovian quantum processes that can be simulated by a classical feed-forward mechanism through the environment, as depicted in Fig.~\ref{fig:gen_non-Markov}(c). This feed-forward mechanism can be modelled by a set of conditional quantum instruments in between the sites, where each instrument is conditioned on a setting that has a stochastic dependence on all the previous measurements and settings of past instruments \cite{Giarmatzi2021witnessingquantum}. In the end, all the classical outcomes that were used to implement the process are discarded. A process that can be reproduced in this way is said to have \emph{classical memory} because it only requires keeping track of classical variables that correlate the channels.

In particular, we consider an initial set of states $\rho_{s_{0}}$, each of which can be selected with a probability $p(s_{0})$. The classical variable $s_{0}$ also determines a probability distribution $p(s_{1}|s_{0})$, which in turn determines the instrument $T^n_{s_n}:=\{T^1_{m_{1}|s_{1}}\}_{m_1}$. All the classical variables at this stage stochastically determine the next instrument, and so on. The process matrix has the form:

\begin{align}\label{classical_mem_proc}
W_{\textrm{CM}} 
= &\sum_{\vec{s},\vec{m}} p(s_{0})  \rho_{s_{0}}  \otimes   p(s_{1}|s_{0}) T^1_{m_{1}|s_{1}} \otimes\cdots \nonumber\\
 &\quad\otimes  p(s_{N-1}|\vec{s}_{|N-2},\vec{m}_{|N-2}) T^{N-1}_{m_{N-1}|s_{N-1}}
\end{align}
where $\vec{s} \equiv \{s_{0},s_{1},\cdots s_{N-1}\}$ and $\vec{s}_{|n} \equiv \{s_{0},s_{1}\cdots s_{n}\}$ are the settings of the instruments, and $\vec{m} \equiv \{m_{1},m_{2}\cdots m_{N-1}\}$ and $\vec{m}_{|n} \equiv \{m_{1},m_{2}...m_{n}\}$ are the measurement outcomes of the instruments. This form of process matrix then represents the most general type of multi-time quantum process with classical memory.

\subsection{Classical common-cause process}\label{CCC sec}
We now reintroduce an interesting sub-class of classical-memory processes, called the \emph{classical common-cause processes (CCC)}. Notice that in Eq.~\eqref{classical_mem_proc}, if the conditional probability distribution is independent of measurement results $\vec{m}$ of instruments, i.e., $p(s_{n}|\vec{s}_{|n-1},\vec{m}_{|n-1})=p(s_{n}|\vec{s}_{|n-1})$ for all $n$, then the process matrix takes the form:

\begin{align}\label{CCC_proc1}
&W_{\textrm{CCC}} \nonumber
\\
&=\sum_{\vec{s}} p(s_{0},s_{1} \cdots , s_{N-1}) \rho_{s_{0}} \otimes T^1_{s_{1}}\otimes  \cdots \otimes T^{N-1}_{s_{N-1}},
\end{align}
where $T^{n}_{s_{n}}:=\sum_{m_n}T^{n}_{m_{n}|s_{n}}$ is a CJ representation of a CPTP map. 
From now on for notational convenience we represent $W_{\textrm{CCC}}$ as 
\begin{align}\label{CCC_proc2}
W_{\textrm{CCC}} &= \sum_{\nu} p(\nu)\rho_{\nu} \otimes T^{1}_{\nu} \otimes  \cdots \otimes T^{N-1}_{\nu} , \\
&=\sum_\nu p(\nu) \widetilde{W}_\nu
\end{align}
where in general, $\nu\equiv \{s_1,s_2,\cdots, s_{N-1}\}$ represents multiple random variables. Note, $W_{\textrm{CCC}}$ is a convex sum of Markovian process matrices $\widetilde{W}_\nu = \rho_\nu \bigotimes_{n=1}^{N-1} T^n_\nu$.

A particular case of CCC processes relevant to our results is that of \emph{mixed unitary processes}, whose process matrices are probabilistic mixtures of unitary process matrices, as defined in Eq.~\eqref{Unitary_process}:

\begin{equation}
W_{\textrm{MU}} \nonumber
= \sum_{\nu} p(\nu)\rho_{\nu} \otimes \Proj{\widetilde{U}^{1}_{\nu}} \otimes \cdots \otimes \Proj{\widetilde{U}^{N-1}_{\nu}} \label{Eq:Mixed-Unitary process}.
\end{equation}
For $N{=}2$, such a process has been experimentally realised~\cite{Goswami_non_Markov}, where the source of classical memory was due to the non-trivial joint probability distribution of two random variables.

It turns out that, for $N{=}2$, all classical-memory processes are CCC~\cite{Giarmatzi2021witnessingquantum}, however for $N{\ge} 3$, CCC processes are a strict subset of those with classical memory~\cite{taranto2023characterising}.  We see from Eq.~\eqref{CCC_proc1} that CCC processes are proportional to separable states. Furthermore, in Ref.~\cite{Nery_2021}, examples of a strictly larger class of separable processes have been found by removing the trace-preserving condition from $T^n_{s_n}$ in Eq.~\eqref{CCC_proc1}. It is also possible to have quantum memory processes~\cite{Giarmatzi2021witnessingquantum,Nery_2021,taranto2023characterising}, the temporal correlation manifested in such a process exhibits genuine nonclassical correlation. In the present work, we do not focus on these processes.

\section{Proof of Theorem~\ref{Thm:sufficient_Hamiltonians}}\label{App:proof_theorem_suff_hamil}

\begin{proof}
Consider the system of interest in the process is accessed at time steps $\{t_n\}_{n=1}^N$.  The system-environment unitary $U^n$ from $t_n$ to $t_{n+1}$ is given by a time-ordered exponential as shown in Eq.~\eqref{Eq:nunitary}. Given that the environment operators commute, we can always find a common eigenbasis \{$\ket{\nu}$\} so that $\mathcal{E}_{j}(t){=}\sum_\nu \lambda^{\nu}_j(t)\ketbra{\nu}{\nu}$ and $\lambda^{\nu}_j(t)$ are the time-dependent eigenvalues. The corresponding unitary $U^{n}$ takes the form of a controlled unitary:

\begin{align}
U^n &= \sum_{\nu}T \exp \left[-i \int_{t_n}^{t_{n+1}}dt \tilde{S}^{\nu}(t)\right] \otimes \proj{\nu} \nonumber  \\
&=\sum_\nu \widetilde{U}_{\nu}^{n}{\otimes}\ketbra{\nu}{\nu} \label{Eq:unitary_hamiltonian}
\end{align}
where $\tilde{S}^{\nu}(t) = \sum_j \lambda^{\nu}_j(t) S_j(t)$ and
\begin{align}
 \widetilde{U}_{\nu}^{n}&=T \exp \left[-i \int_{t_n}^{t_{n+1}}dt \tilde{S}^{\nu}(t)\right]\label{Eq:choi_nth_unitary}. 
\end{align}
The time-dependence of the operators $U^n$ and $\widetilde{U}^n$ is implied. The CJ-representation of the unitary in Eq.~\eqref{Eq:unitary_hamiltonian}, labelled by the relevant input-output Hilbert spaces, is
\begin{align}
&\Proj{U^{n}}^{A_O^{n}A_I^{n+1}E^{n}E^{n+1}} \nonumber \\
  &{=}\sum_{\nu, \widetilde{\nu}}\KetBra{\widetilde{U}_\nu^{n}}{\widetilde{U}_{\widetilde{\nu}}^{n}}^{A_O^{n}A_I^{n+1}} {\otimes}\ketbra{\nu \nu}{\widetilde{\nu} \widetilde{\nu}}^{E^{n}E^{n+1}} \label{Eq:choi_unitary_nth_step}.
\end{align}

Let us order the tensor products in reverse for the convenience of notation and consider the CJ-representation of the unitary $\Proj{U^{N-1}}$ at the \emph{final} time step $N$, whose output environment, $E^{N}$, is traced out resulting in
\begin{align}
&\mathds{1}^{E^{N}} *\Proj{U^{N-1}}^{A_O^{N-1}A_I^{N}E^{N-1}E^{N}} \nonumber \\
&= \sum_{\nu} \Proj{\widetilde{U}_{\nu}^{N-1}}^{A_O^{N-1}A_I^{N}}{\otimes} \ketbra{\nu}{\nu}^{E^{N-1}},  \label{Eq:trace_last_unitary} \end{align}
where $\Proj{\widetilde{U}_{\nu}^{n}}{=}\KetBra{\widetilde{U}_{\nu}^{n}}{\widetilde{U}_{\nu}^{n}}$ is the CJ representation of the unitary defined in Eq.~\eqref{Eq:choi_nth_unitary}.

Next, we consider the term $\Proj{U^{N-2}}$. We expand it using Eq.~\eqref{Eq:choi_unitary_nth_step}, and then calculate the link product with Eq.~\eqref{Eq:trace_last_unitary} to obtain
\begin{align}
    &\mathds{1}^{E^{N}} *\Proj{U^{N-1}} * \Proj{U^{N-2}}  \nonumber \\
    =\sum_\nu &\Proj{\widetilde{U}^{N-1}_\nu}^{A_O^{N-1}A_I^{N}} {\otimes} \Proj{\widetilde{U}^{N-2}_\nu}^{A_O^{N-2}A_I^{N-1}} \nonumber \\&{\otimes}\ketbra{\nu}{\nu}^{E^{N-2}}\label{Eq:last_two_unitary}.
\end{align}
Continuing the concatenation till the first timestep and finally appending the initial system-environment state $\rho^{A_{I}^{1}E^{1}}$, we have the overall process matrix
\begin{align}
&W^{A^{1}_IA^{1}_O{\dots}A^{N}_I} \nonumber \\
&=    \bigg(\Asterisk_{n=1}^{N-1} \Proj{U^{n}}^{A_O^{n}A_I^{n+1}E^{n}E^{n+1}}\bigg) * \mathds{1}^{E^{N}}* \rho^{A_{I}^{1}E^{1}} \nonumber \\ 
&=\sum_\nu p(\nu){\bigotimes_{n=1}^{N-1}}\left(\Proj{\widetilde{U}^{n}_\nu}^{A_O^{n}A_I^{n+1}}\right) {\otimes} \rho_{\nu}^{A_{I}^{1}}\nonumber \\
&= \sum_{\nu} p(\nu) \widetilde{W}_\nu^{A^{1}_IA^{1}_O{\dots}A^{N}_I},
\end{align}
where the conditional state $\rho_{\nu}^{A_{I}^{1}}$ is given by
\begin{equation}
  \rho_{\nu}^{A_{I}^{1}} =\frac{ \Tr_{E^{1}}[\rho^{A_{I}^{1}E^{1}}(\mathds{1}^{A_{I}^{1}}{\otimes}\ketbra{\nu}{\nu}^{E^{1}})]}{p(\nu)},   
\end{equation}
and where
\begin{equation}
  p(\nu) = \Tr_{A_I^{1}E^{1}}[\rho^{A_{I}^{1}E^{1}}(\mathds{1}^{A_{I}^{1}}{\otimes}\ketbra{\nu}{\nu}^{E^{1}})]
\end{equation}
forms a probability distribution and 
\begin{equation}
 \widetilde{W}_\nu{=} \rho_{\nu}^{A_{I}^{1}}\bigotimes_{n=1}^{N-1}\Proj{\widetilde{U}_{\nu}^{n}}   
\end{equation}
 is a unitary Markovian process. This concludes our proof.
\end{proof}

\section{Proof of Theorem~\ref{Thm:direct_cause_extension}}\label{App:proof_direct_cause_ext}

\begin{proof}  Consider an arbitrary $N$ time-steps mixed unitary process $W_{\textrm{MU}}$, with the form in Eq.~\eqref{Eq:Mixed-Unitary process}, where $\Proj{\widetilde{U}_{\nu}^{n}}$ is the CJ representation of the unitary $\widetilde{U}_{\nu}^{n}:A_O^n \to A_I^{n+1}$ for $n \in \{1,2,{\cdots},N{-}1\}$. 
We can now introduce the controlled unitaries
\begin{equation}
    \widetilde{U}^{n} = \sum_{\nu} \widetilde{U}_{\nu}^{n} \otimes \proj{\nu}^{E},
\end{equation}
where $\{\ket{\nu}\}$ is an orthonormal basis in $E$. 
Consider now a set $\Gamma$ of the initial system-environment states, such that 
\begin{align}
    \Gamma {:=} \{\rho^{A^1_IE}|\Tr_E[({\mathds{1}_{A^1_I}{\otimes} \proj{\nu}^E})\rho^{A^1_IE}]{=}p(\nu)\rho_\nu, {\forall} \nu\}. \label{Eq:steered states}
\end{align}
 An example of such a state is the following quantum-classical state:

\begin{align}
    \rho^{A_{I}^{1}E} = \sum_\nu p(\nu) \rho_\nu^{A_{I}^{1}} \otimes\proj{\nu}^E. 
\end{align}
Now, with any state $\rho\in \Gamma$, the mixed unitary process $W_{\textrm{MU}}$ assumes the form

\begin{align}
    W_{\textrm{MU}}= &\rho *\Proj{\widetilde{U}^{1}} *\cdot\cdot\cdot *\Proj{\widetilde{U}^{N-1}} * \mathds{1}^{E}. 
\end{align}
 This proves our theorem. 
\end{proof}

\begin{figure}
    \centering
    \includegraphics[width=0.9\columnwidth]{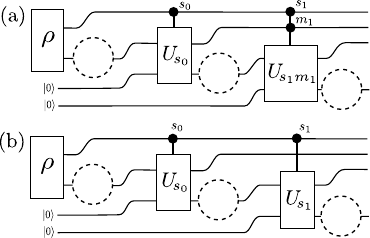}
    \caption{Circuits representing different classes of classical memory, with time oriented from left to right. (a) Full classical memory with dependence on past (including stochastic), and (b) classical common cause memory. The dashed circles represent the sites.}
    \label{fig:memory_circuits}
\end{figure}

\section{Unitary form of classical memory non-Markovian processes} \label{App:unitary_form_classical_memory}

Here, we show that every multi-time classical memory non-Markovian process can be written as a sequence of controlled unitaries and \textit{vice versa} --- any sequence of controlled unitaries where the controls are in a common basis constitutes classical memory. In Section~\ref{Classical memory process}, we gave the general mathematical form of a classical-memory process as Eq.~\eqref{classical_mem_proc}, which represents arbitrary instruments which act probabilistically conditioned on past classical variables. This is the basis on which we claim that memory is classical in nature.
This form can be simplified by combining terms to get
\begin{equation}\label{WCMappend}
W_{\textrm{CM}} = \sum_{\vec{s},\vec{m}} p(s_{0}) \rho_{s_{0}} \bigotimes_{n=1}^{N-1}   \bar{T}^{n}_{m_{n}s_{n}|\vec{s}_{|n-1},\vec{m}_{|n-1}},
\end{equation}
where
\begin{equation}
 \bar{T}^n_{m_{n}s_{n}|\vec{s}_{|n-1},\vec{m}_{|n-1}} = p(s_{n}|\vec{s}_{|n-1}\vec{m}_{|n-1}) T^{n}_{m_{n}|s_{n}}
\end{equation}
are also instruments with measurement `outcomes' labelled by both $m_{n}$ and $s_{n}$, since $\sum_{s_{n}, m_{n}}\tilde{T}^n_{m_{n}s_{n}|\vec{s}_{|n-1},\vec{m}_{|n-1}}$ are CPTP channels. This shows that the non-deterministic instruments can be modelled by suitable deterministically applied instruments.

For completeness, we explicitly construct a set of controlled unitaries that would yield the same $W_{\textrm{CM}}$ as \eqref{classical_mem_proc} given the set of conditional probabilities $p(s_{n}|\vec{s}_{|n-1},\vec{m}_{|n-1})$, and corresponding instruments.
First note that a quantum instrument can be specified by a set of Kraus operators $\{M_{m,k}\}$ where a measurement result of $m$ applies the CP channel $\sum_{k} M_{mk}\cdot M_{mk}^{\dagger}$, and $\sum_{mk}M_{mk}^{\dagger}M_{mk}=\id$.
The index $k$ models inefficient measurements.
We can construct a (non-unique) isometry $V=\sum_{mk}\ket{k}\ket{m}M_{mk}$, and this isometry can then be extended to a unitary operator by supplying orthogonal columns.
The resulting unitary represents a possible dilation of the instrument to a system-environment model, with a projective measurement $\ketbra{m}{m}$ on the second environment.
Alternatively, combining the environments, we can label their joint initial state as $\ket{0}$, and recognize the measurement as a degenerate measurement with measurement operators $\id\otimes\ketbra{m}{m}$. So, every instrument can be dilated to a conditional unitary with an extra environment and a measurement on that environment.

\begin{figure}
    \centering
    \includegraphics[width=0.95\columnwidth]{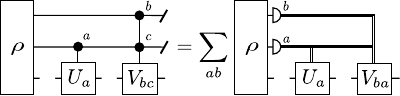}
    \caption{Tracing over controlled unitaries in a fixed basis is equivalent to a sum over classically controlled unitaries. Time is from left to right. Note that controls on the same environment become conditioned on the same classical index.}
    \label{fig:control_to_classical}
\end{figure}

Each of these classically-conditioned unitaries can then be modelled as a controlled unitary, controlled by some environment with basis $\{\ket{s_{k}}\}$ as $\sum_{s_{k}} \ketbra{s_{k}}{s_{k}}\otimes U_{s_{k}}$. Provided all the controls are in the same basis (and respecting any degeneracies in the measurements), tracing over the environment will result in a sum over classically-conditioned unitaries as in Fig.~\ref{fig:control_to_classical}.

To explicitly model the \emph{stochastic} dependence on previous variables consider the following unitary operator

\begin{equation}
  R_{a}=\ket{\xi}\bra{0}+\sum_{i=1}^{d-1} \ket{\xi_{\perp}^{i}}\bra{i}
\end{equation}
where $\ket{\xi}=\sum_{i=0}^{d-1} \sqrt{p(r_{i}|a)} \ket{s_{i}}$, $\braket{j}{k}=\delta_{jk}$, and $\{\ket{s_{i}}\}_{i=0}^{d-1}$ is the preferred environment orthonormal basis.
The $\ket{\xi_{\perp}^{i}}$ are chosen so that  $\braket{\xi}{\xi_{\perp}^{i}}=0$, and $\braket{\xi_{\perp}^{i}}{\xi_{\perp}^{j}}=\delta_{ij}$.
Now, the controlled-unitary circuit in Fig.~\ref{fig:stochastic_to_control} effectively implements a stochastically conditioned unitary as required upon tracing out the environment.

Hence any process matrix of the form of Eq.~\eqref{classical_mem_proc} can be modelled by a quantum circuit of the form depicted in Fig.~\ref{fig:memory_circuits}~(a). The reverse is easy to show --- any circuit of the form in Fig.~\ref{fig:memory_circuits}~(a) will have a process matric of the form of Eq.~\eqref{classical_mem_proc}.

\begin{figure}
    \centering
    \includegraphics[width=0.8\columnwidth]{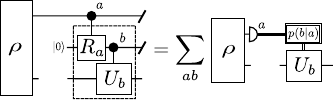}
    \caption{Implementing a stochastically controlled unitary. The operator in the dashed box is also a unitary. Time is from left to right.}
    \label{fig:stochastic_to_control}
\end{figure}

\emph{Classical common cause:} As discussed in Section~\ref{CCC sec}, if each of the conditional probability distributions is independent of the measurement outcomes, 
$p(s_{n}|\vec{m}_{|n-1}\vec{s}_{|n-1})=p(s_{n}|\vec{s}_{|n-1})$, 
we have a CCC process given in Eq.~\eqref{CCC_proc1}. The corresponding circuit diagram for this class becomes Fig.~\ref{fig:memory_circuits}~(b).

\section{Hamiltonian with entangled eigenstates} \label{App:Hamiltonian_entangled}

In this section, we consider an interaction Hamiltonian with non-product eigenstates and analyze the nature of non-Markovianity, i.e., whether the resulting process is classical or quantum memory process. To be concrete, we consider the isotropic Heisenberg model with magnetic field in $z$ direction,
\begin{eqnarray}\label{Heisenberg_int}
     H^{ES} = J(\sigma_{x}\otimes\sigma_{x} + &\sigma_{y}\otimes\sigma_{y}+\sigma_{z}\otimes\sigma_{z})\\&+ B(\sigma_{z}\otimes\mathds{1}+\mathds{1}\otimes\sigma_{z})\nonumber
\end{eqnarray}
The eigenvectors and corresponding eigenvalues for the above interaction are the following
\begin{eqnarray}
    |\psi_{1}\rangle = \frac{1}{\sqrt{2}}(|01\rangle-|10\rangle);\quad E_{1} = -3J\\
    |\psi_{2}\rangle = |11\rangle;\quad  E_{2} = J-2B\\
    |\psi_{3}\rangle = \frac{1}{\sqrt{2}}(|01\rangle+|10\rangle);\quad E_{3} = J\\
    |\psi_{4}\rangle = |00\rangle;\quad E_{4} = J+2B.
\end{eqnarray}
Evidently, two of the eigenstates are maximally entangled states, and for a non-zero magnetic field ($B\ne 0$), the spectrum is non-degenerate.

Now, the unitary generated by the interaction $H$ is
\begin{eqnarray}\label{Heisenberg_unitary}
    U(t) = \sum_{i=1}^{4}\exp{(-iE_{i}t)}|\psi_{i}\rangle\langle\psi_{i}|,
\end{eqnarray}
where $t$ is the interaction time. Importantly, even though the spectrum of interaction Hamiltonian is non-degenerate, there can be degeneracy in the eigenvalues of the unitary (the phase terms $\exp(-iE_{i}t)$). In particular, for the following interaction times
\begin{eqnarray}\label{int_times}
    t = \frac{n\pi}{2J}; \quad n\in \mathds{N}
\end{eqnarray}
the phase terms corresponding to $|\psi_{1}\rangle$ and $|\psi_{3}\rangle$ are equal, hence any quantum state in the subspace spanned by $\ket{\psi_1}$ and $\ket{\psi_3}$ is a valid eigenstate. Particularly, choosing eigenstates $1/\sqrt{2}(\ket{\psi_3}\pm \ket{\psi_1})$, we observe that at these specific interaction times, the same unitary can be generated using an alternative interaction Hamiltonian with the same spectrum but with all product eigenstates. This allows us to express the alternative Hamiltonian as a sum of tensor products. Due to the orthogonality of the product eigenstates, the operators associated with the environment naturally commute. Hence, we can apply Theorem \ref{Thm:sufficient_Hamiltonians}. To conclude, the interaction Hamiltonian Eq.~\eqref{Heisenberg_int} results in a classical common-cause process if the interaction times between the interventions are as in Eq.~\eqref{int_times}. However, the resulting process becomes a quantum memory process for a different interaction time. To show that, we consider the simplest scenario with only two-time interventions. Consider the initial system-environment state to be a correlated Bell state, i.e., $\rho^{E_{1}A_{1}} = |\psi^{+}\rangle\langle\psi^{+}|$, where $|\psi^{+}\rangle = (|00\rangle+|11\rangle)/ \sqrt{2}$. The resulting process is obtained as
\begin{eqnarray}
    W^{A_{1}A_{2}B_{1}} = \rho^{E_{1}A_{1}}*\Proj{U(t)}^{E_{1}A_{2}E_{2}B_{1}}*\mathds{1}^{E_{2}},
\end{eqnarray}
where $\Proj{U(t)}^{E_{1}A_{2}E_{2}B_{1}}$ is the CJ operator corresponding to the unitary in Eq. \eqref{Heisenberg_unitary}. We use the Peres-Horodecki criterion~\cite{peres96,horodecki96} in the bipartition $A1|A_2B_1$ of the normalized process $W^{A_{1}A_{2}B_{1}}$ to check whether the process is a quantum memory process. To this end, we use the `negativity' as the measure of quantum memory:
\begin{eqnarray}
    \mathcal{N} = \sum_{\lambda<0}|\lambda|,
\end{eqnarray}
where $\lambda$ are the eigenvalues of $W^{T_{A_{1}}}$, where, superscript $T_{A_{1}}$ is partial transpose with respect to $A_{1}$.
\begin{figure}
    \centering
    \includegraphics[width=1\linewidth]{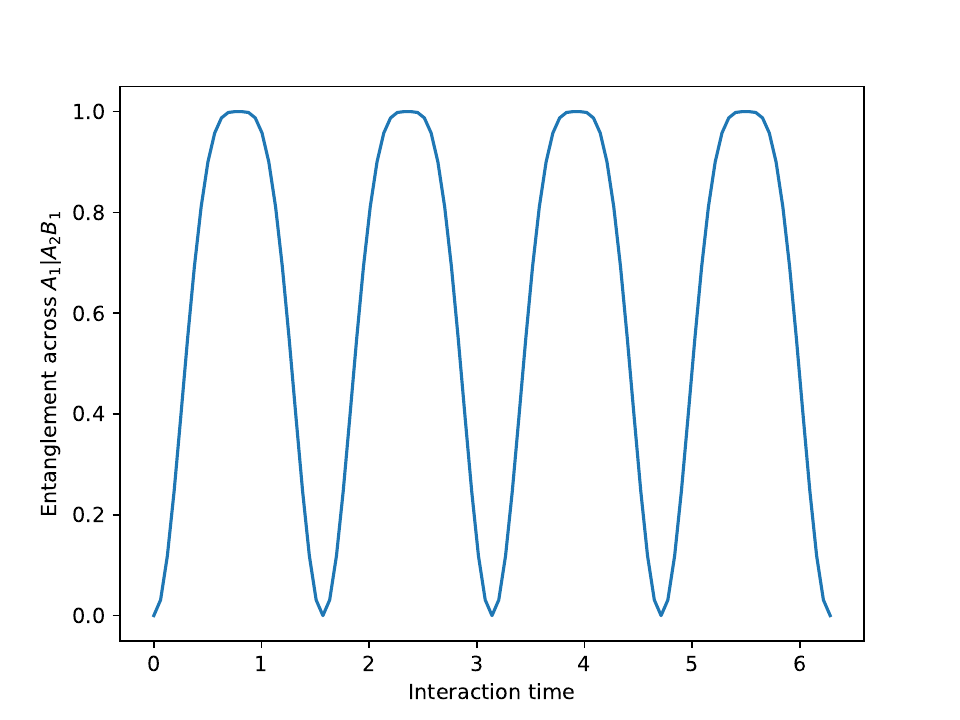}
    \caption{Entanglement across the bipartition $A_{1}|A_{2}B_1$ in the process matrix $W^{A_{1}A_{2}B_{1}}$, for varying interaction time in units of $Jt$.  \label{fig:QM_Heisenberg_int}} 
\end{figure}
In Fig. \ref{fig:QM_Heisenberg_int}, we have plotted the entanglement measure across $A_1|A_{2}B_1$ with varying interaction time. We observe the oscillating behaviour of entanglement, with $J$ determining the frequency of oscillation, and it vanishes for the interaction times $t = \pi/2,\pi,3\pi/2,2\pi$, where, as expected, it can be modelled as a classical common-cause process as discussed above.

This example shows that a Hamiltonian with entangled eigenstates can generate classical-memory processes but only for certain choices of interaction times that result in unitaries with product eigenstates. Otherwise, the process is a quantum memory process. This strengthens the conjecture that if an interaction generates a classical-memory process for all times, it must have only product eigenstates. Similar to the above example, one can construct other models, for instance, interaction with all Bell states as eigenstates, i.e, $H = E_{\pm}|\psi^{\pm}\rangle\langle\psi^{\pm}| + F_{\pm}|\phi^{\pm}\rangle\langle\phi^{\pm}|$. Then, a sufficient condition for the classical-memory process is that the ratio $E_{+}-E_{-}/F_{+}-F_{-}$ must be a rational number, i.e., with this constraint satisfied, there exist interaction times such that resulting process is a classical common-cause process. The Observation~\ref{Obs:entangled_eigenstate} in the main text is a special case of this scenario.

\section{Slowly varying environment basis} \label{App:varying_env-basis}
Here, we consider a two dimensional system and environment with three interventions on the system. We consider the control to be in the canonical basis $\{|0\rangle,|1\rangle\}$ between the first two intervention (i.e., $A$ and $B$). Then, we introduce a variation in the control basis in the environment in the next time step (between $B$ and $C$) by rotating the basis by $U(\delta t)$, which is a unitary generated by a slowly varying Hamiltonian. In the first time-step, we consider the controlled-NOT as the environment-system unitary, with the environment as control. Moreover, the Hamiltonian which generates rotation in environment basis is chosen to be Pauli $X$. For the resulting process, we observe the amount of entanglement across the lab $A$ and lab $B$ for different choices of $\delta t$. We find that for any non-zero $\delta t$ the resulting process has entanglement across lab $A$, hence requiring quantum memory across the lab (see Figure  \ref{fig:Mem_Var}).  
\begin{figure}[h!]
    \centering
    \includegraphics[width=1\linewidth]{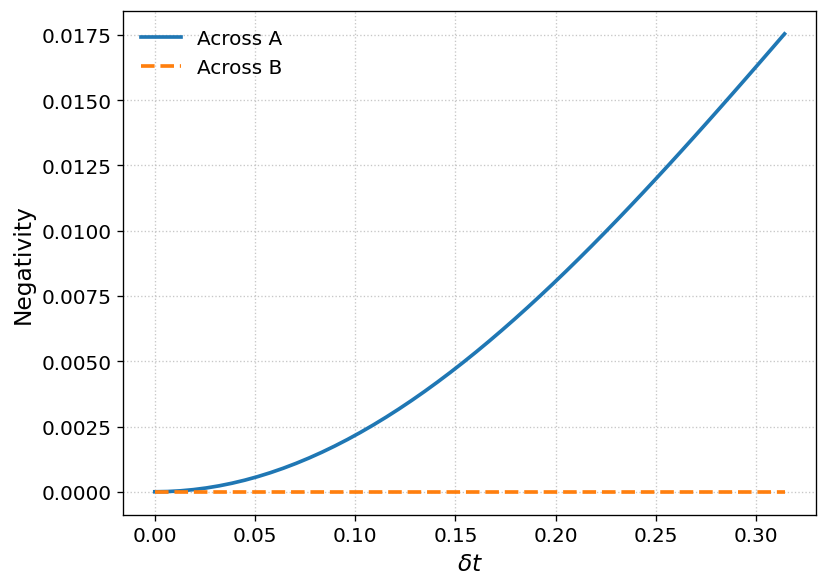}
    \caption{Quantum memory across lab $A$ and lab $B$ with variation in the control basis of the environment.}
    \label{fig:Mem_Var}
\end{figure}

\end{document}